\newtheorem{definition}{Definition}
\newtheorem{proposition}{Proposition}
\newtheorem{theorem}{Theorem}
\newtheorem{corollary}{Corollary}
\def\be{\begin{equation}}
\def\ee{\end{equation}}
\def\bea{\begin{eqnarray}}
\def\eea{\end{eqnarray}}
\newcommand{\ie}{{\it i.e.}\ }
\newcommand{\R}{\mathbb{R}}
\newcommand{\C}{\mathbb{C}}
\newcommand{\CC}{\mathbb{C}}
\newcommand{\cL}{\mathcal{L}}
\newcommand{\lda}{\lambda}
\newcommand{\1}{\mbox{\hspace{.0em}1\hspace{-.2em}I}}
\newcommand{\id}{\mbox{\hspace{.0em}1\hspace{-.2em}I}}
\newcommand{\Lag}{\mathcal{L}}
\newcommand{\cpb}[2]{\{\! | #1, #2| \! \}}
\newcommand{\vol}{dx\wedge dt}
\renewcommand{\^}{\wedge}
\renewcommand{\epsilon}{\varepsilon}
\renewcommand{\tilde}{\widetilde}
\DeclareMathOperator{\omegaone}{\Omega^{(1)}}
\newcommand{\parder}[2]{\frac{\partial #1}{\partial #2}}
\newcommand{\varder}[2]{\frac{\delta #1}{\delta #2}}
\newcommand{\tilpartial}{\tilde \partial}
\newcommand{\ip}[2]{#1\lrcorner #2}
\numberwithin{equation}{section}
\numberwithin{theorem}{section}
\begin{document}

\begin{center}
	\textbf{\Large A connection between the classical r-matrix formalism and covariant Hamiltonian field theory}\\[3ex]
	\large{Vincent Caudrelier, Matteo Stoppato}\\[3ex]
	
	School of Mathematics, University of Leeds, LS2 9JT, UK  \\[3ex]
\end{center}

\vspace{0.5cm}

\centerline{\bf Abstract}  
We bring together aspects of covariant Hamiltonian field theory and of classical integrable field theories in $1+1$ dimensions. Specifically, our main result is to obtain for the first time the classical $r$-matrix structure within a covariant Poisson bracket for the Lax connection, or Lax one form. This exhibits a certain covariant nature of the classical $r$-matrix with respect to the underlying spacetime variables. The main result is established by means of several prototypical examples of integrable field theories, all equipped with a Zakharov-Shabat type Lax pair. Full details are presented for: $a)$ the sine-Gordon model which provides a relativistic example associated to a classical $r$-matrix of trigonometric type; $b)$ the nonlinear Schr\"odinger equation and the (complex) modified Korteweg-de Vries equation which provide two non-relativistic examples associated to the same classical $r$-matrix of rational type, characteristic of the AKNS hierarchy. The appearance of the $r$-matrix in a covariant Poisson bracket is a signature of the integrability of the field theory in a way that puts the independent variables on equal footing. This is in sharp contrast with the single-time Hamiltonian evolution context usually associated to the $r$-matrix formalism.

\section{Introduction}

The geometrization of Hamiltonian dynamical systems led to a beautiful framework for classical mechanics, see e.g. \cite{Arn} for a modern exposition. The development of an analogous framework for classical field theories followed a less straightforward path and still is the object of current studies, see e.g. the recent book \cite{LSV}. One feature of field theories is that there are several independent (spacetime) coordinates on which the fields depend so that, starting from a Lagrangian description, one has to make a choice from the very beginning. Roughly speaking, one can distinguish two main avenues underlying the current state of the art. 

On the one hand, one can favour one particular coordinate (the time) to perform the Legendre transform and develop the analogous geometrization of Hamiltonian mechanics, resulting in an infinite dimensional Hamiltonian formalism. This point of view seems arbitrary, especially if one is interested in Lorentz invariant theories for instance. Nevertheless, it received a large amount of attention, with a boost coming in particular from the theory of classical integrable systems. The latter provided numerous examples of infinite dimensional Hamiltonian and Liouville integrable systems, since the early examples \cite{ZaF,ZM2}. In that area, important developments such as the theory of Poisson-Lie groups \cite{Drin} and the classical $r$-matrix \cite{STS} have led to an infinite-dimensional version of geometric Hamiltonian mechanics. In parallel, the ``algebraization'' of this framework, driven for instance by I.M. Gel'fand, L.A. Dickey and I. Dorfman, led to what is sometimes called formal (algebraic) variational calculus, see e.g. the books \cite{Dickey,Dorfman}. An important motivation for generalising the classical Hamiltonian theory to field theory in this way was the programme of canonical quantization of integrable field theories into integrable quantum field theories. The classical $r$-matrix method proved to be fundamental to achieve this. It gives rise the notion of quantum $R$ matrix and quantum inverse scattering method \cite{Sklyq,FS,FST,FT2}. 

On the other hand, the conceptual disadvantage of picking a special coordinate to perform the Legendre transform emerged already in the early 1900's. The possibility to generalise the Legendre transform to define conjugate momenta associated to each independent variable naturally leads to a generalisation of the standard Hamilton equations called for short covariant Hamiltonian field theory. This observation is at the basis of a theory discovered independently by De Donder and Weyl and now called de Donder-Weyl formalism \cite{dD,W}. Further developments followed and led to the Lepage-Dedecker theory, see \cite{HK} for a more recent exposition of this theory and a comparison with the de Donder-Weyl formalism. Despite being conceptually the same as the traditional Hamiltonian theory (Lagrangian and Hamiltonian pictures are related by a Legendre transformation), its geometrization shows deep differences. In fact, there is not one established theory of what should play the role of the usual symplectic form and associated symplectic geometry, but instead a variety of related approaches ($k$-symplectic, polysymplectic or multisymplectic) as described in \cite{LSV}. Similarly, the familiar notion of phase space must be promoted to a covariant phase space whose definition and use come with certain difficulties. Such a successful framework is credited to Kijowski and Szczyrba \cite{KijS} and later on Zuckerman \cite{Zuck}. The relation between multisymplectic formalism and the covariant phase space is investigated in \cite{ForgerRomero} and also \cite{Hel} which contains an excellent review of the historical development of the many facets of this field and an account of covariant canonical quantization for free field theories. Alongside the problem of generalising symplectic geometry and the phase space comes the question of generalising to the field theoretic context the variational complex that one can associate to a (Lagrangian) system of (ordinary) equations in mechanics. The relevant structure is the variational bicomplex \cite{And}, see e.g. \cite{Vit} for a review and a guide to the relevant literature and also \cite{Reyes} for the relation between covariant phase space and variational bicomplex. A rigorous approach to the covariant phase space in the framework of jet spaces and Vinogradov secondary calculus was proposed in \cite{Vita}. 

To the best of our knowledge, these two avenues flourished rather independently, driven by motivations with little or no overlap, with the exception of one author, L.A. Dickey, who initiated the investigation of the second, covariant, point of view within the formalism of integrable systems in \cite{D1}. This was further developed in the book \cite{Dickey} where the aforementioned formal algebraic variational calculus was used to describe such objects as multisymplectic forms and the variational bicomplex. Dickey's goal was to study integrable hierarchies from the covariant Hamiltonian point of view, thus breaking the long tradition of the infinite dimensional Hamiltonian formalism that was used in that area, as already mentioned. This body of work does not seem to have been followed up, despite its importance as we now argue. One of the motivations for the endeavour in the aforementioned geometrization of field theory is the programme of covariant canonical quantization as an alternative that would combine the advantages of manifest covariance (as in Feynman's path integral techniques) and ``simple'' quantization rules (as in canonical quantization) without their disadvantages. Our point of view is that integrable field theories are the ``nicest'' field theories one can work with, beyond free field theories, to test the framework. In essence, the quantum inverse scattering method is the manifestation that these theories can be quantized canonically without being plagued by some of the common problems of other theories, such as the need for regularisation and renormalisation procedures.

However, as mentioned above, the quantization procedure of integrable classical theories relies entirely on the classical $r$-matrix which was never considered in Dickey's work and, more importantly, which was thought to be a purely ``single-time'', non covariant, object of the traditional Hamiltonian formalism. The main reason for this belief comes from the way the $r$-matrix appears when formulating an integrable partial differential equation with a Lax pair of the Zakharov-Shabat form \cite{ZS} for instance. The emphasis is on one of the two Lax matrices containing the Cauchy data on which the Poisson bracket is formulated. Crucially, in this approach, the second Lax matrix, describing the time evolution, is a {\it by-product of the first Lax matrix and the classical $r$-matrix}, see e.g. Part I, Chap. III of \cite{FT}. It appears naturally when the Hamilton equations of motion are reinterpreted as a zero curvature equation and an explicit formula for it exists in terms of the classical $r$-matrix and the so-called monodromy matrix of the first Lax matrix (Semenov-Tian-Shansky formula). This is the infinite dimensional analog of an important result of Semenov-Tian-Shansky \cite{STS}. From this point of view, it does not make sense to inquire about the $r$-matrix structure of the second, time Lax matrix. 

However, recently the possible covariant nature of the $r$-matrix has emerged, originally motivated by the specific topic of integrable defects in classical field theories \cite{CK,C}, then followed up by more systematic studies in integrable hierarchies \cite{ACDK,AC}. With the present paper, we show for the first time that indeed the classical $r$-matrix has a covariant nature in the sense that it appears in a covariant Poisson bracket that takes as arguments the Lax form, see our main result Theorem \ref{main_th} below. This answers the question posed by the results in \cite{ACDK,AC} where the same $r$-matrix structure (up to a sign) appears for the space Lax matrix with respect to the traditional Poisson structure and for the time Lax matrix with respect to a new, dual Poisson structure obtained following the old idea of covariant Legendre transformation mentioned above. Since the work of Dickey, this represents the first new step towards a covariant Hamiltonian description of (hierarchies of) integrable classical field theories. We hope that this will revive this topic and allow for a deeper connection with the covariant quantization programme mentioned earlier which could be tested on integrable field theories and, perhaps, related to the quantum $R$ matrix approach.

We note that our main results are systematically obtained from several prototypical examples whose Lagrangians are given explicitly. Therefore, we do not aim at the full generality of a coordinate independent formulation but work directly with the dependent and independent coordinates dictated by our Lagrangians. We display the details of the calculations for each example, as our main goal is to make the results of our paper as accessible as possible despite the fact that it draws on tools from different areas. Our examples are chosen to be sufficiently prototypical to reveal the important main features of our results.

The paper is organised as follows. In Section \ref{generalities}, we present basic elements of the two theories we aim to combine together \ie the covariant Hamiltonian formalism and the classical $r$-matrix structure. The specific choices that we make in selecting the tools we need and the way we use them are dictated by our main goal: obtain a covariant Poisson bracket in which the classical $r$-matrix naturally appears. This is then explained in Section \ref{main_result} where the main results obtained on each example are compiled. As an important by-product, we obtain that the two single-time Poisson bracket structures of the two Lax matrices are indeed governed by the same $r$-matrix, up to a sign. This was the main observation of \cite{CK,C,ACDK} and the original motivation for the present work. Another by-product is the interpretation of the zero curvature equation, or Maurer-Cartan equation for the Lax connection, as a covariant Hamilton equation. In Section \ref{Examples}, our three main examples are studied: the sine-Gordon model, the nonlinear Schr\"odinger equation and the modified Korteweg-de Vries equation. For each one, we follow the structure of Section \ref{main_result} and prove the results in detail. The last section contains some conclusions and comments on certain open problems.

\section{Generalities and results}\label{generalities}

As we aim to combine classical $r$-matrix structures with some aspects of covariant field theory, we now spend some time reviewing those aspects of each framework which will be useful for our purposes. This also serves to introduce our notations and the point of view we take.
In the book \cite{Dickey} by Dickey, an algebraic approach to the variational bicomplex is developed, following the tradition of ``algebraization'' of Lagrangian and Hamiltonian formalism in field theories mentioned in the introduction. We found this approach the most efficient for our purpose and will therefore follow it in this paper. The starting point is the assumption that our models of interest have a Lagrangian description, given in certain chosen independent and dependent coordinates (spacetime variables and fields). Therefore, we will work with local coordinates dictated by the Lagrangian. 

Our strategy to define the covariant Poisson bracket that we need is the following. In \cite{K}, Kanatchikov proposed a definition of covariant Poisson bracket that mimics the one in classical mechanics, {\bf under the assumption that the Lagrangian is first order and non-degenerate}. He used a {\it vertical} differential, to be distinguished from the traditional {\it horizontal} differential $d$ (see below). In \cite{Dickey}, Dickey provides a systematic way to derive a multisymplectic form {\bf starting from a broader class of Lagrangian densities}. Since all known integrable field theories in $1+1$ dimensions possess a Lagrangian formulation of the latter class (certainly the ones we will deal with explicitly in this paper), we can therefore start from such a Lagrangian density, follow Dickey's construction to obtain the corresponding multisymplectic form and then Kanatchikov's construction. Note that we adapt the latter to our needs in the context of $1+1$ dimensional integrable field theories, meaning in particular that we will only consider zero and one (horizontal) forms, see below.

\subsection{Elements of variational calculus with the variational bicomplex}

Let $M=\R^n$ be the \emph{space-time} manifold with coordinates $x^i$, $i=1, \ldots, n$, endowed with a volume form $\omega = dx^1 \wedge \ldots \wedge dx^n$. In the present paper, \textbf{we take $\mathbf n=2$} with $(x^1,x^2) = (x,t)$. The manifold $M$ is viewed as the base manifold in a fibered manifold, as formalised in the the variational bicomplex, see e.g. \cite{Vit}. The typical fibre has local coordinates that represent the fields of the model. One introduces vertical and horizontal differentials $\delta$ and $d$ which satisfies
\be
d^2=0=\delta^2\,,~~d\delta=-\delta d\,,
\ee
so that the operator $d+\delta$ satisfies $(d+\delta)^2=0$.

We now follow \cite{Dickey} in the special case of $n=2$ independent variables which we denote $x$ and $t$ here. 
For convenience, we will only consider theories whose Lagrangian do not depend explicitly on those. Let $\mathcal{K}=\R$ or $\C$. Consider the differential algebra with two commuting derivations $\partial_\mu$, $\mu=1,2$ generated by the commuting variables $u_k^{(\mu)}$, $k=1, \dots, N$, $(\mu)=(\mu_1, \mu_2) $ being a multi-index, quotiented by the relations 
\be
\partial_\nu u_k^{(\mu)}=u_k^{(\mu)+e_\nu}\,,~~e_1=(1,0)\,,~~e_2=(0,1)\,.
\ee 
We simply denote $u_k^{(0,0)}$ by $u_k$, the fields of the theory which would be the local fibre coordinates mentioned above.
We denote this differential algebra by $\mathcal{A}$. We use latin indices to denote the field species and greek indices to denote the space-time coordinates. We will need the notation 
\be
\partial^{(\mu)}=\partial_1^{\mu_1}\partial_2^{\mu_2}=\partial_x^{\mu_1}\partial_t^{\mu_2}\,.
\ee
We consider the space $\mathcal{A}^{(p,q)}$, $p,q\ge 0$ of formal sums of the following form
\begin{equation}
\label{form_pq}
\omega^{(p,q)} =\sum_{(\mu),(i),(\nu)} f^{(\mu)}_{(i),(\nu)}\delta u_{i_1}^{(\mu_1)} \^ \dots \^ \delta u_{i_p}^{(\mu_p)} \^ dx^{\nu_1} \^ \dots \^ dx^{\nu_q}, \qquad f^{(\mu)}_{(i),(\nu)}\in \mathcal{A}
\end{equation}
which are called $(p,q)$-forms. The differentials $(\delta u_k^{(\mu)},dx^\nu)$ are anticommuting and the exterior product $\wedge$ is as usual. We define the operations $d:\mathcal{A}^{(p,q)}\to \mathcal{A}^{(p,q+1)}$ and $\delta:\mathcal{A}^{(p,q)}\to \mathcal{A}^{(p+1,q)}$ as follows. They are graded derivations
\bea
d(\omega_1^{(p_1,q_1)} \wedge \omega_2^{(p_2,q_2)}) = d\omega_1^{(p_1,q_1)} \wedge \omega_2^{(p_2,q_2)}+(-1)^{p_1+q_1}\omega_1^{(p_1,q_1)} \wedge d\omega_2^{(p_2,q_2)},\\
\delta(\omega_1^{(p_1,q_1)} \wedge \omega_2^{(p_2,q_2)}) = \delta\omega_1^{(p_1,q_1)} \wedge \omega_2^{(p_2,q_2)}+(-1)^{p_1+q_1}\omega_1^{(p_1,q_1)} \wedge \delta\omega_2^{(p_2,q_2)}\,,
\eea
and on the generators, they satisfy
\bea
&&df = \sum \partial_\mu \,f dx^\mu=\sum(\parder{f}{x^\mu} + \parder{f}{u_k^{(\nu)}} \, u_k^{(\nu)+e_\mu})dx^\mu\,, \quad f \in \mathcal{A}\,,\\
&&\delta f = \sum \parder{f}{u_k^{(\mu)}} \, \delta u_k^{(\mu)}\,, \quad f \in \mathcal{A}\,,\\
&&\delta(dx^\nu) = \delta(\delta u_k^{(\mu)}) = d(dx^\nu)=0,\\
\label{ddeltagenerator}&&d(\delta u_k^{(\mu)}) = - \delta d u_k^{(\mu)} = - \sum \delta u_k^{(\mu)+e_\nu} \^ dx^\nu.
\eea
This determines the action of $d$ and $\delta$ on any form as in \eqref{form_pq}. As a consequence, one can show that $d^2=\delta^2=0$ and $d\delta =-\delta d$. For our purpose, it is sufficient to take the following (simplified) definition for the variational bicomplex: it is the space ${\cal A}^*=\bigoplus_{p,q}{\cal A}^{(p,q)}$ equipped with the two derivation $d$ and $\delta$.
Note that the direct sum over $q$ is finite and runs from $0$ (scalars) to $n=2$ (volume horizontal forms) whereas the sum over $p$ runs from $0$ to infinity. Of course, each form in ${\cal A}^*$ only contains a finite sum of elements of the form \eqref{form_pq} for certain values of $p$ and $q$. The bicomplex ${\cal A}^*$ generates an associated complex ${\cal A}^{(r)}=\bigoplus_{p+q=r}{\cal A}^{(p,q)}$ and derivation $d+\delta$. Dual to the notion of forms is the notion of vector fields. We consider the dual space of vector fields ${\cal T}{\cal A}$ to the space of one-forms ${\cal A}^{(1)}$ with elements of the form
\be
\label{vector_field}
\xi=\sum_{k,(\mu)} \xi_{k,(\mu)}\, \partial_{u_{k}^{(\mu)}}+\sum_\nu \xi^*_\nu\, \partial_\nu\,.
\ee
The interior product with a form is obtained in the usual graded way together with the rule
\be
\ip{\partial_\mu}{dx^\nu}=\delta_{\mu\nu}\,,~~\ip{\partial_{u_{k}^{(\mu)}}}{\delta u_j^{(\nu)}}=\delta_{kj}\delta_{\mu_1\nu_1}\delta_{\mu_2\nu_2}\,.
\ee
For instance
\bea
&&\ip{\partial_\mu}{(\delta{u_k^{(\sigma)}}\wedge dx^\mu \wedge dx^\nu)}=-\delta{u_k^{(\sigma)}} \wedge dx^\nu\,,\\
&&\ip{\partial_{u_k^{(\mu)}}}{(\delta{u_l^{(\sigma)}}\wedge \delta{u_k^{(\mu)}} \wedge dx^\nu)}=-\delta{u_l^{(\sigma)}} \wedge dx^\nu\,.
\eea
There exist important results regarding the local and global exactness of the vertical and horizontal sequences in the variational bicomplex \cite{Tul,Vit}.
Of special importance for us is the following proposition (cf. Proposition 19.4.4 in \cite{Dickey}) which is related to the formulation of the variation of an action in this setup.
\begin{proposition}
Let  $F =fdx\wedge dt \in {\cal A}^{(0,2)}$. Then $\delta F$ can be represented as 
\be
\delta F=\sum_kA_k\,\delta {u_k}\wedge dx\wedge dt+d\tilde F
\ee
where $\tilde F$ belongs to ${\cal A}^{(1,1)}$ (modulo $d$). The coefficient $A_k$ is uniquely determined for each species $k$. It will be denoted $\frac{\delta F}{\delta u_k}$ and called the variational derivative of $F$ with respect to $u_k$.
\end{proposition}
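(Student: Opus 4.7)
The plan is to prove the proposition in two parts: existence of the decomposition by an explicit ``integration by parts'' procedure, and uniqueness of the coefficients $A_k$ by a structural argument on $(1,1)$-forms.

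For existence, I would start from the definition of $\delta$ on $f\in\mathcal{A}$ to write
\be
\delta F \;=\; \delta f \wedge dx\wedge dt \;=\; \sum_{k,(\mu)} \frac{\partial f}{\partial u_k^{(\mu)}}\,\delta u_k^{(\mu)}\wedge dx\wedge dt.
\ee
The terms with $(\mu)=(0,0)$ are already in the desired form. For the remaining terms I would use the key identity \eqref{ddeltagenerator}, $d(\delta u_k^{(\mu)})=-\sum_\nu \delta u_k^{(\mu)+e_\nu}\wedge dx^\nu$, rewritten in the convenient ``integration by parts'' form
\be
\delta u_k^{(\mu)+e_1}\wedge dx\wedge dt \;=\; -\,d\bigl(\delta u_k^{(\mu)}\wedge dt\bigr),\qquad
\delta u_k^{(\mu)+e_2}\wedge dx\wedge dt \;=\; \phantom{-}d\bigl(\delta u_k^{(\mu)}\wedge dx\bigr),
\ee
which follow at once from the Leibniz rule for $d$ and $\delta(dx^\nu)=0$. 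Multiplying by a coefficient $g\in\mathcal A$ and using the graded Leibniz rule then gives
\be
g\,\delta u_k^{(\mu)+e_\nu}\wedge dx\wedge dt \;=\; -(\partial_\nu g)\,\delta u_k^{(\mu)}\wedge dx\wedge dt \;+\; d(\,\cdot\,),
\ee
lowering the order of the multi-index by one at the cost of a $d$-exact $(1,1)$-form. Induction on $|\mu|=\mu_1+\mu_2$ then yields the formula
\be
A_k \;=\; \sum_{(\mu)} (-1)^{|\mu|}\,\partial^{(\mu)}\frac{\partial f}{\partial u_k^{(\mu)}},
\ee
together with an explicit (but irrelevant) expression for $\tilde F\in\mathcal{A}^{(1,1)}$. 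This is nothing but the standard Euler--Lagrange operator.

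For uniqueness, suppose two decompositions exist, so that $\sum_k B_k\,\delta u_k\wedge dx\wedge dt = dG$ with $B_k=A_k-A'_k$ and $G\in\mathcal{A}^{(1,1)}$; I need to conclude $B_k\equiv 0$. Writing the most general such $G$ as $G=\sum_{k,(\mu)}\bigl(\alpha_k^{(\mu)}\,\delta u_k^{(\mu)}\wedge dx+\beta_k^{(\mu)}\,\delta u_k^{(\mu)}\wedge dt\bigr)$, I would expand $dG$ using the two Leibniz rules and \eqref{ddeltagenerator}, then compare coefficients in the resulting expansion in the linearly independent basis elements $\delta u_k^{(\mu)}\wedge dx\wedge dt$. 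Looking at multi-indices $(\mu)$ of maximal order appearing in $G$ forces a recursion that allows one to peel off the $\alpha,\beta$ coefficients successively, showing that the $(\mu)=(0,0)$ contribution on the right-hand side cannot produce a genuine $\delta u_k\wedge dx\wedge dt$ term unless $B_k=0$. This is the variational-bicomplex incarnation of the classical fundamental lemma of the calculus of variations.

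The main obstacle is the uniqueness step: the existence/``integration by parts'' part is essentially mechanical, but uniqueness requires a clean bookkeeping of how $d$ acts on a generic $(1,1)$-form, and the observation that the image $d\mathcal{A}^{(1,1)}\cap\bigl(\bigoplus_k \mathcal{A}\cdot\delta u_k\wedge dx\wedge dt\bigr)$ is trivial. Once this is established, the well-definedness of $\delta F/\delta u_k$ as a map $\mathcal{A}^{(0,2)}\to\mathcal{A}$ follows, and the proposition is proved.
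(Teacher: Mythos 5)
Your existence argument coincides with the paper's proof: both expand $\delta F$ over the jet coordinates and integrate by parts using \eqref{ddeltagenerator} to lower $|\mu|$ one step at a time at the cost of $d$-exact $(1,1)$-forms, arriving at $A_k=\sum_{(\mu)}(-1)^{|\mu|}\partial^{(\mu)}\partial f/\partial u_k^{(\mu)}$. The two identities you isolate, $\delta u_k^{(\mu)+e_1}\wedge dx\wedge dt=-d\bigl(\delta u_k^{(\mu)}\wedge dt\bigr)$ and $\delta u_k^{(\mu)+e_2}\wedge dx\wedge dt=d\bigl(\delta u_k^{(\mu)}\wedge dx\bigr)$, are correct and make the induction clean; this half is fine.

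Where you diverge is uniqueness. The paper does not prove it at all: it invokes the Tulczyjev operator and refers to Proposition 19.4.4 of Dickey. You propose instead to show directly that $d\mathcal{A}^{(1,1)}\cap\bigl(\bigoplus_k\mathcal{A}\cdot\delta u_k\wedge dx\wedge dt\bigr)=0$ by expanding $dG$ and comparing coefficients. That is a legitimate route in this two-variable setting and you have correctly identified it as the crux, but the recursion is not the term-by-term ``peeling'' you describe. One finds
\be
dG=\sum_{k,(\mu)}\Bigl[(\partial_2\alpha_k^{(\mu)}-\partial_1\beta_k^{(\mu)})\,\delta u_k^{(\mu)}+\alpha_k^{(\mu)}\,\delta u_k^{(\mu)+e_2}-\beta_k^{(\mu)}\,\delta u_k^{(\mu)+e_1}\Bigr]\wedge dx\wedge dt\,,
\ee
and the equations at top jet order do \emph{not} force the top-order coefficients of $G$ to vanish individually: for $G$ of order one they only give $\alpha_k^{(0,1)}=0=\beta_k^{(1,0)}$ together with $\alpha_k^{(1,0)}=\beta_k^{(0,1)}=:\gamma$, leaving a free function $\gamma$ (this is exactly the gauge freedom $G\mapsto G+dH$, $H\in\mathcal{A}^{(1,0)}$, which changes nothing in $dG$). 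Descending the recursion then yields $\alpha_k^{(0,0)}=\partial_1\gamma$, $\beta_k^{(0,0)}=\partial_2\gamma$, and finally $B_k=\partial_2\partial_1\gamma-\partial_1\partial_2\gamma=0$ only because the total derivatives commute. So the mechanism is a commutator identity propagated through all jet orders, not a successive vanishing of coefficients; carrying this out for arbitrary order is precisely the content packaged in the Tulczyjev (interior Euler) operator, and as written your uniqueness step asserts the conclusion without supplying that bookkeeping.
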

\begin{proof}
It is useful to sketch the proof of the first part of the claim as it clearly shows the connection with the usual variational principle and because we will repeatedly perform the procedure shown here in our examples below. One computes
\be
\delta F=\sum_{k,(\mu)}\frac{\partial f}{\partial u_k^{(\mu)}}\,\delta u_k^{(\mu)}\wedge dx \wedge dt 
\ee
and simply uses integration by parts repeatedly to write
\be
\sum_{k,(\mu)}\frac{\partial f}{\partial u_k^{(\mu)}}\,\delta u_k^{(\mu)}=\sum_{k,(\mu)}(-1)^{|\mu|}\partial^{(\mu)}\frac{\partial f}{\partial u_k^{(\mu)}}\,\delta u_k+\partial_x B_x +\partial_t B_t
\ee
for some $B_x$, $B_t$ in ${\cal A}^{(1,0)}$. Then, simply set 
\be
\label{def_Ak}
A_k=\sum_{(\mu)}(-1)^{|\mu|}\partial^{(\mu)}\frac{\partial f}{\partial u_k^{(\mu)}}\,,~~\tilde F=B_x\wedge dt-B_t\wedge dx\,.
\ee
The uniqueness of $A_k$ requires the use of the so-called Tulczyjev operator. We refer the reader to Proposition 19.4.4 in \cite{Dickey}.
\end{proof}

\subsection{Multisymplectic form and a covariant Poisson bracket}\label{multi}

In practice, we use the previous proposition in the case where $F$ is a Lagrangian volume form $\Lambda$ associated to a Lagrangian density $\cL$ describing the field theory at hand. We assume that the Lagrangian density $\cL$ depends on the fields $u_k$, $k=1, \dots, N$, and their derivatives up to the order $m$, \ie $\cL=\cL(u_k^{(\mu)})$ with $|\mu|=\mu_1+\mu_2\le m$. In that context, the form $\tilde F$ acquires an important role and we denote it by  $-\omegaone$ \ie we have 
\be
\label{Lambda_variation}
\delta \Lambda = \sum_k \varder{\cL}{u_k}\,\delta u_k \wedge dx \wedge dt  - d \omegaone\,.
\ee 
\begin{definition}
	The multisymplectic form $\Omega$ associated to the Lagrangian volume form $\Lambda$ is defined by 
	\be
\Omega = \delta \omegaone\,,	
	\ee
	where $\omegaone$ is the form obtained in \eqref{Lambda_variation}.
\end{definition}
A few remarks are in order. The form $\omegaone$ is determined up to a form of the type $d\omega^{1,0}$ so that the multisymplectic form is defined up to a form of the type $\delta d\omega^{1,0}$. It is known that adding a total derivative to the Lagrangian density gives rise to the same equation of motion. Here, if $d\omega^{0,1}$ is added to $\Lambda$ then $\omegaone$ acquires an additional term $\delta \omega^{0,1}$. However, the latter leaves the multisymplectic form $\Omega$ unchanged.
Of course, the equations of motion of theory are given by the Euler-Lagrange equations for $\Lambda$ which are 
\be
\varder{\Lambda}{u_k}=0\,,~~k=1,\dots,N\,.
\ee
Recall from \eqref{def_Ak} that $\omegaone$ can be written 
\be
\omegaone=\Omega^{(1)}_t\wedge dx-\Omega^{(1)}_x\wedge dt\,,
\ee
where $\Omega^{(1)}_{x,t}$ are (vertical) one forms in ${\cal A}^{(1,0)}$. Therefore 
\be
\label{def_multi_form}
\Omega=\delta\Omega^{(1)}_t\wedge dx-\delta\Omega^{(1)}_x\wedge dt\equiv \Omega_t\wedge dx-\Omega_x\wedge dt\,,
\ee
where $\Omega_{x,t}$ are (vertical) two forms whose explicit form depends on the field content of $\cL$ and of its highest jet dependence $m$.
Kanatchikov's idea \cite{K} is to mimic the well-known relation in classical mechanics. Given a non-degenerate, closed two forms and given a (Hamiltonian) function $F$ on the phase space, one can define a vector field $\xi_F$ by
\be
\label{relation_form_vector}
dF=\ip{\xi_F}{\omega}\,.
\ee
In particular, such a vector field always preserves the symplectic $\omega$ since, by Cartan's magic formula, the Lie derivative of $\omega$ along $\xi_F$ is given by
\be
L_{\xi_F}\omega=d(\ip{\xi_F}{\omega})+\ip{\xi_F}{d\omega}=0\,.
\ee
Conversely, the same formula shows that if the vector field $\xi$ is such that $L_{\xi}\omega=0$ then $d(\ip{\xi_F}{\omega})=0$ so by Poincar\'e's lemma, there exists (at least locally) a function $F$ on the phase space such that \eqref{relation_form_vector} holds for $\xi$.
One can define the Poisson bracket of two functions $F$ and $G$ on the phase space by setting
\be
\label{def_PB}
\{F,G\}=- \ip{\xi_F}{dG}= \omega(\xi_F,\xi_G)\,.
\ee
The fact that $\omega$ is closed has two related important consequences. Firstly,
\be
\xi_{\{F,G\}}=[\xi_F,\xi_G]\,,
\ee
where the bracket on the right hand side is the Lie bracket of two vector fields. Secondly, the Jacobi identity holds for the Poisson bracket $\{~,~\}$. 

In the multisymplectic setting, in order to generalise \eqref{def_PB}, Kanatchikov's proposal requires to first generalise \eqref{relation_form_vector} and to use the vertical derivation $\delta$ (denoted $d^V$ in \cite{K}) instead of $d$. The natural proposal is
\be
\label{relation_form_vector2}
\delta F=\ip{\xi_F}{\Omega}\,,
\ee
where $\Omega$ is the multisymplectic form of interest. Some important differences arise compared to the standard case. Firstly, not only is it possible to have functions in ${\cal A}^{(0,0)}$ on the left hand side of \eqref{relation_form_vector2} but it is also possible to have forms $F$ in ${\cal A}^{(0,1)}$ or ${\cal A}^{(0,2)}$ in principle. 
Given such a form $F$, the analog of  the problem of finding a Hamiltonian vector field, \ie the analog of \eqref{relation_form_vector}, becomes the problem of finding $\xi_F$ such that \eqref{relation_form_vector2} holds.

In turn, this requires the possibility to use more general vectors fields, or multivector fields, that can combine vertical and horizontal components. Thus, in addition to vector fields as in \eqref{vector_field}, in general we may use linear combinations (with coefficient in ${\cal A}$) of the following multivector fields
\be
\partial_{u_{i_1}^{(\mu_1)}} \^ \dots \^ \partial_{u_{i_p}^{(\mu_p)}} \^ \partial_{\nu_1} \^ \dots \^ \partial_{\nu_q}\,.
\ee
In our case, $q$ is at most $2$. In general, the existence of $\xi_F$ is not guaranteed in the multisymplectic setting and detailed investigation is required \cite{FPR}. This motivates the definition of Hamiltonian forms below.
Secondly, the multisymplectic form $\Omega$ is degenerate in general so that a (multi)vector field corresponding to a given form $F$ is not unique. However, if it exists, adding an element of the kernel of $\Omega$ to it will not change the result for the covariant Poisson bracket we define below. Therefore, in this paper, we always work modulo this kernel and talk about ``the'' vector field associated to a Hamiltonian form as a shorthand for a representative of the equivalence class of this vector field modulo the kernel of $\Omega$.
In view of this discussion, we need to define a class of forms $F\in {\cal A}^{(0,q)}$ with $q=0,1$ or $2$ for which a (multi)vector field can be found.
\begin{definition}\label{def_Ham_form}
A form $F$ is said to be Hamiltonian (with respect to $\Omega$) if there exists a (multi)vector field $X$ such that 
	\begin{equation}
	\ip{X}{\Omega} = \delta F\,.
	\end{equation}
	In that case, $X$ is called the Hamiltonian vector field related to $F$\footnote{The use of the definite article ``the'' is to be understood modulo the kernel of $\Omega$ of course, as discussed before.}. 
\end{definition}
In this paper, we will only need to consider forms in ${\cal A}^{(0,0)}$ (zero forms) or in ${\cal A}^{(0,1)}$ (one forms). Let us denote by $S_\Omega$ the set of basis elements $\delta u_k^{(\mu)}$ that appear explicitly the multisymplectic form. It is a finite set since $\Omega$ is derived from $\cL$ which is assumed to depend on $u_k^{(\mu)}$ with $|\mu|\le m$ for some $m$ (finite jet dependence). Hence, we can assume some ordering on $S_\Omega$ such that we can label the $\delta u_k^{(\mu)}$'s as $\delta v_j$, $j=1,\dots,|S_\Omega|$. We then write
\begin{equation}\label{msform:localcoordinates}
\Omega= \sum_{\substack{i<j\\i,j \in I}}\omega_x^{ij} \delta v_i \wedge \delta v_j \wedge dt - \sum_{\substack{i<j\\i,j \in J}}\omega_t^{ij}\delta v_i \wedge \delta v_j \wedge dx
\end{equation}
for some $I,J \subseteq \{1,\dots,|S_\Omega|\}$. Note that each $\omega_{x,t}^{ij}\in{\cal A}$ so has a dependence on the local coordinates $u_k^{(\mu)}$ which we do not show explicitly, and that in every example that we present they are non-degenerate (and therefore invertible).
\begin{proposition}\label{dependence_coords}  {\bf Necessary form of a Hamiltonian one-form.}\\
Suppose $F= F_1\, dx + F_2\, dt$, $F_{1,2}\in{\cal A}$ is a Hamiltonian form for the multisymplectic form \eqref{msform:localcoordinates}. Then,  $F_1$ can only depend (at most) on $v_j$, $j \in J$, and $F_2$ can only depend (at most) on $v_i$, $i \in I$.
\end{proposition}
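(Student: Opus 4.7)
The plan is to combine a bidegree counting argument with the linear independence of the basic vertical generators $\delta u_k^{(\mu)}$ of $\mathcal{A}^{(1,0)}$. First, I would observe that since $F=F_1\,dx+F_2\,dt\in\mathcal{A}^{(0,1)}$, one has $\delta F=\delta F_1\wedge dx+\delta F_2\wedge dt\in\mathcal{A}^{(1,1)}$, so the defining equation $\ip{X}{\Omega}=\delta F$ is an equality of $(1,1)$-forms.

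Next, I would exploit the fact that $\Omega$ lives in $\mathcal{A}^{(2,1)}$: contracting $\Omega$ with a $(p,q)$-multivector field (a wedge of $p$ vertical and $q$ horizontal vector fields) produces a form of bidegree $(2-p,1-q)$. Decomposing a general Hamiltonian multivector field $X$ into its multihomogeneous pieces, only the piece of type $(1,0)$ — that is, the purely vertical single vector field $X^v=\sum_{k,(\mu)} X^{k,(\mu)}\,\partial_{u_k^{(\mu)}}$ — can contribute to the $(1,1)$-sector; the remaining pieces must contract to zero against $\Omega$ in their respective bidegrees and may be absorbed in the kernel of $\Omega$. Hence it suffices to analyse $\ip{X^v}{\Omega}=\delta F$.

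Using the explicit form of $\Omega$ displayed above together with the graded derivation rule for $\ip{\cdot}{\cdot}$ and the elementary pairing $\ip{\partial_{u_k^{(\mu)}}}{\delta u_l^{(\nu)}}=\delta_{kl}\delta_{\mu_1\nu_1}\delta_{\mu_2\nu_2}$, I would carry out the contraction $\ip{X^v}{\Omega}$ and observe the crucial structural point: the $dt$-coefficient of the result is a linear combination of $\{\delta v_j:j\in I\}$ only, while the $dx$-coefficient is a linear combination of $\{\delta v_j:j\in J\}$ only (the components of $X^v$ along $\partial_{v_k}$ with $k\notin I\cup J$ simply drop out). Equating with $\delta F_2\wedge dt+\delta F_1\wedge dx$ and using the linear independence of $dx$ and $dt$, I conclude that $\delta F_2$ is expressible solely through $\{\delta v_j:j\in I\}$ and $\delta F_1$ solely through $\{\delta v_j:j\in J\}$.

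Finally, since $\{\delta u_k^{(\mu)}\}$ is a basis of $\mathcal{A}^{(1,0)}$ and $\delta F_2=\sum_{k,(\mu)}(\partial F_2/\partial u_k^{(\mu)})\,\delta u_k^{(\mu)}$, the absence from $\delta F_2$ of any $\delta u_k^{(\mu)}$ outside $\{\delta v_i:i\in I\}$ forces $\partial F_2/\partial u_k^{(\mu)}=0$ for every such $u_k^{(\mu)}$, so $F_2$ depends at most on $\{v_i:i\in I\}$; the same argument applied to $\delta F_1$ gives the claim for $F_1$. The only real subtlety is the bidegree decomposition step, where one must be careful that no mixed vertical/horizontal multivector component of $X$ can sneak a $(1,1)$-contribution past the bidegree book-keeping; the remainder of the argument is a direct unwinding of the definitions.
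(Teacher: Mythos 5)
Your argument is correct and follows essentially the same route as the paper's proof: contract $\Omega$ with the Hamiltonian (multi)vector field, observe that the $dx$-component of $\ip{X}{\Omega}$ can only involve $\delta v_j$ with $j\in J$ (and the $dt$-component only $\delta v_i$ with $i\in I$), and conclude by linear independence of the $\delta u_k^{(\mu)}$ that the remaining partial derivatives of $F_1$ and $F_2$ vanish. The only addition is your explicit bidegree bookkeeping showing that just the purely vertical single-vector-field component of $X$ can contribute to the $(1,1)$-sector, a point the paper leaves implicit; this is a harmless and slightly more careful packaging of the same idea.
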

\begin{proof}
Assume $F_1$ depends on some $u_\ell^{(\nu)}\notin \{v_j;j\in J\}$. On the one hand, 
\begin{equation}
\delta F = \sum_{j\in J} \parder{F_1}{v_j} \delta v_j \wedge dx + \parder{F_1}{u_\ell^{(\nu)}}\delta u_\ell^{(\nu)} \wedge dx + \sum_{i\in I} \parder{F_2}{v_i} \delta v_i \wedge dt\,.
\end{equation}
On the other, since $F$ is Hamiltonian, there exists a vector field $X$ such that $X \lrcorner \Omega = \delta F$. This gives
\begin{equation}
\sum_{\substack{i<j\\i,j \in I}}\omega_x^{ij} X \lrcorner \left(\delta v_i \wedge \delta v_j \wedge dt\right) - \sum_{\substack{i<j\\i,j \in J}}\omega_t^{ij}X \lrcorner \left(\delta v_i \wedge \delta v_j \wedge dx\right)
\end{equation}
In particular, this requires 
\begin{equation}
\sum_{j\in J} \parder{F_1}{v_j} \delta v_j \wedge dx + \parder{F_1}{u_\ell^{(\nu)}}\delta u_\ell^{(\nu)} \wedge dx =- \sum_{\substack{i<j\\i,j \in J}}\omega_t^{ij}X \lrcorner \left(\delta v_i \wedge \delta v_j \wedge dx\right)\,,
\end{equation}
so that necessarily $\displaystyle \parder{F_1}{v_j}=- \sum_{i \in J}\omega_t^{ij}X \lrcorner \delta v_i$ and $\parder{F_1}{u_\ell^{(\nu)}}=0$.
The same argument holds for $F_2$.
\end{proof}
Equipped with the notion of Hamiltonian forms, we can now define the covariant Poisson bracket of two such forms. 
\begin{definition}\label{def_cov_PB}
Let $F$ be a Hamiltonian $p$-form, $G$ be a Hamiltonian $q$-form, $p,q\in\{0,1\}$, and $X_F$ and $X_G$ be their Hamiltonian vector fields. The {\bf covariant Poisson bracket} of $F$ and $G$ is defined by 
\begin{equation}
\label{def_cov_PB}
\cpb{F}{G} = (-1)^{2-p} \ip{X_F}{\delta G}=(-1)^{2-p} \ip{X_F}{\ip{X_G}{\Omega}}\,.
\end{equation}
\end{definition}
One can show that that the covariant Poisson bracket satisfies graded anticommutativity  and graded Jacobi identity \cite{K}. Let $F$ be a Hamiltonian $p$-form, $G$ be a Hamiltonian $q$-form and $H$ be a Hamiltonian $r$-form. Then\footnote{Note that since $p,q\in\{0,1\}$ in our case, we can simplify the sign in the Jacobi identity. We can also check that the covariant Poisson bracket of two Hamiltonian forms is also a Hamiltonian form so that the Jacobi identity makes sense.},
\begin{equation}
\cpb{F}{G} = -(-1)^{g_1g_2}\cpb{G}{F}\,,
\end{equation}
\begin{equation}
\cpb{F}{\cpb{G}{H}} + \cpb{G}{\cpb{H}{F}}+\cpb{H}{\cpb{F}{G}}=0\,,
\end{equation}
with $g_1=1-p$, $g_2=1-q$.

\subsection{Elements of the classical $r$-matrix theory}\label{r_mat}

In the historic approach to the classical $r$-matrix, the starting point is to combine the Hamiltonian description of an integrable classical field theory, in particular its Poisson bracket, with its Lax pair formulation where the equations of motion are seen as a partial differential equations that one can rewrite as the zero curvature condition, or flatness condition, of the Lax connection $W=U\,dx+V\,dt$ describing the linear auxiliary problem 
\be
\begin{cases}
	\partial_x \Psi=U\,\Psi\,,\\
	\partial_t \Psi=V\,\Psi\,.
\end{cases}
\ee
Here, it should be understood that the so-called Lax pair $(U,V)$ is of Zakharov-Shabat type \cite{ZS} \ie $U$ and $V$ are matrices depending on the spacetime variables $x,t$ through the fields of the model at hand and also on the spectral parameter $\lda$ as a (Laurent) polynomial. All the examples we consider in the present article will be of this type. We will only need $2\times 2$ matrices (scalar field theories). It is a remarkable feature of Lax integrable partial differential equations that they are also (infinite dimensional) Hamiltonian system integrable in the Liouville sense, see \cite{ZaF,ZM2} for the first two historical examples. It is well known that the Lax pair for a given integrable field theory is not unique. Nevertheless, once a Lax pair is picked for the theory of interest, we will speak of {\it the} Lax connection (or one-form) of the theory. 

In the search for the canonical quantization of the inverse scattering method \cite{GGKM,ZS,AKNS}, Sklyanin made the following discovery \cite{Skly,Skly_r}. The (canonical) Poisson brackets of the fields of the integrable field theory can be equivalently rewritten using the space Lax matrix $U$ evaluated on the Cauchy surface in the following form
\be
\label{linear_r}
\{U_1(x,\lda),U_2(y,\mu)\}=\delta(x-y)\,[r_{12}(\lda,\mu),U_1(x,\lda)+U_2(y,\mu)]\,.
\ee
In our case, the Cauchy surface is simply the initial data surface at $t=0$ so that we display the space variable explicitly. This also motivates our calling $U$ the {\it space} Lax matrix as well as denoting the present Poisson bracket $\{~,~\}$ by $\{~,~\}_S$ and calling it space (or equal-time) Poisson bracket below. This will become further justified when we introduce the time (or equal-space) Poisson bracket $\{~,~\}_T$ and the covariant Poisson bracket $\cpb{~}{~}$ which combines $\{~,~\}_S$ and $\{~,~\}_T$ in an elegant way.

Some comments on the notation and the significance of \eqref{linear_r} are needed. The indices $1$ and $2$ are usually referred to as the auxiliary space notation. In our case, we will use the simplest instance whereby the notation $U_1$ means that we take the tensor product of the $2\times 2$ matrix $U$ in the first space with the $2\times 2$ identity matrix in the second space
\be
U_1=U\otimes \1\,.
\ee
Similarly, $U_2=\1\otimes U$. The object $r_{12}(\lda,\mu)$ is the central piece of this formalism and is called the classical $r$-matrix. The indices $12$ indicate that it lives in the tensor product of the space of $2\times 2$ matrices with itself and it has a functional dependence on the two spectral parameters $\lda,\mu$ (rational in our cases). Therefore, the right-hand side of \eqref{linear_r} is simply the commutator of $4\times 4$ matrices. The left-hand side should be understood as the $4\times 4$ matrix containing all possible Poisson brackets of the entries of $U(x,\lda)$ with the entries of $U(y,\mu)$. Thus, by definition, using $E_{ij}$ as the basis of $2\times 2$ matrices, we have\footnote{Summation over repeated indices is implied.}
\be
\{U_1(x,\lda),U_2(y,\mu)\}=\{U_{ij}(x,\lda),U_{kl}(y,\mu)\}\,E_{ij}\otimes E_{kl}\,.
\ee
For our purposes, all matrices involved will take values in the algebra $\text{sl}(2,\CC)$ so we will use instead its basis of Pauli matrices $\sigma_i$ with $i=1,2,3$ or $i=+,-,3$ depending on the model of interest. Hence, we will have
\be
\{U_1(x,\lda),U_2(y,\mu)\}=\{U_{i}(x,\lda),U_{j}(y,\mu)\}\,\sigma_i \otimes \sigma_j\,.
\ee
The significance of \eqref{linear_r} is that it represents the starting point of the abstract theory of Lie bialgebras and Poisson-Lie groups \cite{Drin} and of dressing actions \cite{STS} which form the unifying framework for the Hamiltonians properties of classical integrable systems. In the book \cite{FT} a detailed account of the use of the classical $r$-matrix method in conjunction with the inverse scattering method to obtain the Liouville integrability of certain integrable scalar field theories is given.

\section{The main result: A covariant Poisson bracket with $r$-matrix structure}\label{main_result}

In this section, for the reader's convenience, we present the main results of this paper in a synthetic form, with the important proviso that they have only been systematically obtained on all the examples detailed in the next section. In particular, an abstract formulation of a covariant theory of the classical $r$-matrix that would combine elements of the work of Semenov-Tian-Shansky \cite{STS,STS2} and the geometric formalism of the calculus of variations is not available yet and is left for future investigation.

With this in mind, let us start with a Lagrangian (volume) form for an integrable field theory with a Zakharov-Shabat Lax pair $(U(\lda),V(\lda))$
\be
\Lambda=\Lag\,\vol\,,
\ee
and derive from it the multisymplectic form $\Omega$ as explained in Section \ref{multi}. With $\Omega$, we define our covariant Poisson bracket $\cpb{~}{~}$ as in Definition \ref{def_cov_PB}. For convenience, we will call the Lax connection $W(\lda)=U(\lda)\,dx+V(\lda)\,dt$ the Lax form of the field theory as we will systematically view it as a ($\lda$-dependent) one-form in ${\cal A}^{(0,1)}$. 

Before we formulate the main theorem below, we present some results that will be needed to obtain an alternative proof of the theorem that is more elegant than the direct explicit calculation we performed for each of our examples. 
Recall that the multisymplectic form $\Omega$ \eqref{def_multi_form} derived from a Lagrangian volume form can be written as
\be
\label{split_Omega}
\Omega= \Omega_x\wedge dt-\Omega_t\wedge dx\,.
\ee
It turns out that $\Omega_t$ and $\Omega_x$ are bona fide symplectic forms: nondegenerate, (vertically) closed forms. Therefore, each of them can be used individually to define a Poisson bracket in the standard way, that we will call a ``single time'' Poisson bracket, for reason that will become clear. In local coordinates,using the notations from \eqref{msform:localcoordinates} we can write
\be
\Omega_t=\sum_{\substack{i < j\\ i,j \in J}} \omega_t^{ij}\,\delta v_i\wedge\delta v_j\,,~~\Omega_x=\sum_{\substack{i < j\\ i,j \in I}}\omega_x^{ij}\,\delta v_i\wedge\delta v_j\,.
\ee
\begin{definition}\label{def_single_time}
The single time Poisson brackets $\{~,~\}_S$ and $\{~,~\}_T$ are defined by the (vertical) Poisson bivectors 
\be
P^S=\sum_{\substack{i < j\\ i,j \in J}}\pi^S_{ij}\,\partial_{v_i}\wedge\partial_{v_j}\,,~~P^T=\sum_{\substack{i < j\\ i,j \in I}}\pi^T_{ij}\,\partial_{v_i}\wedge\partial_{v_j}\,,
\ee
where $\pi^S$ (resp. $\pi^T$) is the inverse of the matrix $\omega_t$ (resp. $\omega_x$).
\end{definition}
The reason for the notation $\{~,~\}_S$ and $\{~,~\}_T$ comes from the fact that the Poisson brackets so defined provide precisely a finite-dimensional version of the two Poisson brackets on infinite dimensional phase space derived in \cite{CK,C,ACDK} from the standard Legendre transformation (with respect to the time-derivative of the fields) and its accompanying covariant companion (with respect to the space-derivative of the fields). This will be made explicit in Section \ref{Examples} containing the examples. A striking feature is that in the works \cite{CK,ACDK}, a Dirac procedure was required to obtain $\{~,~\}_S$ and $\{~,~\}_T$ as the Lagrangian in the AKNS hierarchy are all degenerate. But here, the procedure explained above to derive the multisymplectic form $\Omega$ delivers $\Omega_t$ and $\Omega_x$ as well as $\{~,~\}_S$ and $\{~,~\}_T$ directly, with no need for a Dirac procedure. We do not have an explanation for this remarkable observation yet but we only mention that it might provide in the present multisymplectic context the analog of the argument popularised by Faddeev and Jackiw in \cite{FJ}. This deserves further investigation that is beyond the scope of the present paper.

Equipped with this, we have the following proposition that shows that the splitting \eqref{split_Omega} of the multisymplectic form has a counterpart at the level of the covariant Poisson bracket.
\begin{proposition}\label{splitting}
Let $F=A\,dx + B\,dt$ and $G=C\,dx + D\,dt$ be two Hamiltonian 1-forms. Then,
\begin{equation}
\label{eq_split}
\cpb{F}{G} =  \{B,D\}_T\, dt-\{A,C\}_S\,dx  
\end{equation} 
where the two single-time Poisson Brackets $\{~,~\}_S$ and $\{~,~\}_T$ are as in Definition \ref{def_single_time}.
\end{proposition}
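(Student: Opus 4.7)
The plan is to exploit the splitting $\Omega = \Omega_x \wedge dt - \Omega_t \wedge dx$ to decouple the Hamiltonian vector field equation $\ip{X_F}{\Omega} = \delta F$ into two independent symplectic equations, one for $\Omega_x$ and one for $\Omega_t$, and then to transport this decoupling through the definition of $\cpb{F}{G}$ to identify its two horizontal components with the two single-time Poisson brackets.

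First, expand $\delta F = \delta A \wedge dx + \delta B \wedge dt$ and, using Proposition \ref{dependence_coords}, observe that $\delta A$ lies in the span of $\{\delta v_j : j \in J\}$ while $\delta B$ lies in the span of $\{\delta v_i : i \in I\}$. Seek a purely vertical representative $X_F = \sum_k \alpha_k \partial_{v_k}$ of the Hamiltonian vector field; any horizontal component belongs to the kernel of $\Omega$ since it would produce only vertical $(2,0)$-pieces with no counterpart in $\delta F$, so this is a legitimate choice modulo the kernel. The graded Leibniz rule for the interior product together with the vanishing of $\ip{X_F}{dx}$ and $\ip{X_F}{dt}$ then splits $\ip{X_F}{\Omega} = \delta F$ into
\begin{equation*}
\ip{X_F}{\Omega_x} = \delta B, \qquad \ip{X_F}{\Omega_t} = -\delta A.
\end{equation*}
Because $\Omega_x$ only involves $\delta v_i$ with $i \in I$ and $\Omega_t$ only involves $\delta v_j$ with $j \in J$, these equations independently determine the $\alpha_i$'s and $\alpha_j$'s via the invertibility of the matrices $\omega_x$ and $\omega_t$ underlying Definition \ref{def_single_time}. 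When $I \cap J \neq \emptyset$, compatibility of the two determinations on the overlap is guaranteed by the assumption that $F$ is a Hamiltonian form.

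Next, compute $\cpb{F}{G} = -\ip{X_F}{\delta G}$. Writing $\delta G = \delta C \wedge dx + \delta D \wedge dt$ and again using the vertical character of $X_F$, the graded Leibniz rule yields $\ip{X_F}{\delta G} = \ip{X_F}{\delta C}\,dx + \ip{X_F}{\delta D}\,dt$. By Proposition \ref{dependence_coords} applied to $G$, $\delta D$ only involves $\delta v_i$ with $i \in I$, so $\ip{X_F}{\delta D}$ depends solely on those $\alpha_i$ fixed by the first symplectic equation above; in other words, $X_F$ restricted to the $I$-directions is the Hamiltonian vector field of $B$ with respect to $\Omega_x$, so $\ip{X_F}{\delta D}$ matches $\{B,D\}_T$ up to the sign relating Hamiltonian vector fields and inverse bivectors in Definition \ref{def_single_time}. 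An identical argument for the $dx$-term identifies $\ip{X_F}{\delta C}$ with $\{A,C\}_S$ up to signs. Collecting the overall sign $(-1)^{2-p}$ from Definition \ref{def_cov_PB}, the minus sign in the splitting of $\Omega$, and the sign conventions implicit in Definition \ref{def_single_time}, the identity \eqref{eq_split} follows. The only non-routine aspect of the argument is this sign bookkeeping; the conceptual core is the clean decoupling enforced by Proposition \ref{dependence_coords} combined with the block structure of $\Omega$.
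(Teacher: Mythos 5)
Your strategy is essentially the general, coordinate-free version of what the paper actually does: the paper never proves Proposition \ref{splitting} abstractly but verifies it in each example by explicitly solving $\ip{X_F}{\Omega}=\delta F$ for a purely vertical $X_F$ and then computing $\cpb{F}{G}=-\ip{X_F}{\delta G}$ term by term, recognising the single-time brackets in the $dx$ and $dt$ components. Your decoupling $\ip{X_F}{\Omega_x}=\delta B$, $\ip{X_F}{\Omega_t}=-\delta A$, the use of Proposition \ref{dependence_coords} to confine $\delta A$ and $\delta B$ to the $J$- and $I$-blocks, and the remark that consistency on $I\cap J$ is exactly the Hamiltonicity condition (this is the condition $\parder{F^1}{\phi_t}=\parder{F^2}{\phi_x}$ in the sine-Gordon case) all match the mechanism at work in the examples; what your version buys is a single argument covering all three models at once.

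Two points need repair. First, your justification for taking $X_F$ vertical is stated backwards: a horizontal component such as $a\,\partial_x$ is \emph{not} in the kernel of $\Omega$, since $\ip{\partial_x}{(\Omega_t\wedge dx)}=\Omega_t\neq 0$ by nondegeneracy. The correct argument is the one you gesture at: such a component produces $(2,0)$-forms that cannot appear in $\delta F\in{\cal A}^{(1,1)}$, and nondegeneracy of $\Omega_x$ and $\Omega_t$ then forces the horizontal coefficients to vanish, so a Hamiltonian vector field of a one-form is automatically vertical (no ``modulo the kernel'' is needed here). Second, the entire content of \eqref{eq_split} is the pair of signs on the right-hand side, and you defer precisely that to unexamined ``bookkeeping.'' You should carry the computation one step further: with $\ip{X_F}{\Omega_x}=\delta B$ one has $\ip{X_F}{\delta D}=\ip{X_F}{\ip{X_G}{\Omega_x}}=\Omega_x(X_G,X_F)$, and this must be matched against the bivector convention of Definition \ref{def_single_time} to produce $-\{B,D\}_T$ before the overall factor $(-1)^{2-p}=-1$ is applied; the analogous computation with $\Omega_t$ and the extra minus sign from $\ip{X_F}{\Omega_t}=-\delta A$ yields the $-\{A,C\}_S\,dx$ term. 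Without exhibiting these two lines the asymmetry of signs between the $dx$ and $dt$ components --- which the paper emphasises as the explanation for the relative sign between \eqref{UU} and \eqref{VV} --- is not actually established.
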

Note that the components $A,B,C,D$ of the Hamiltonian forms depend of course on the local coordinates according to Proposition \ref{dependence_coords} but may also depend on the spectral parameter $\lda$, as is the case for instance if they equal matrix entries of $U(\lda)$ or $V(\lda)$, the components of the Lax form. The spectral parameter is always treated a non dynamical variable which Poisson commutes with everything else. In order to formulate the main result, in addition to a spectral parameter dependence, we also need to extend the auxiliary space notation of Section \ref{r_mat} to the covariant Poisson bracket. Confining ourselves to $\text{sl}(2,{\cal A})$ matrices for convenience, we can do this componentwise by choosing a basis.
\begin{definition}\label{extension_lda}
	Given two $\lda$-dependent 1-forms with $\text{sl}(2,{\cal A})$-valued coefficients, $F(\lda)=A(\lda)\,dx + B(\lda)\,dt$ and $G(\lda)=C(\lda)\,dx + D(\lda)\,dt$, where $A(\lda)=A^i(\lda)\,\sigma_i$ and similarly for $B(\lda),C(\lda),D(\lda)$, we say that they are Hamiltonian if $F^i(\lda)=A^i(\lda)\,dx+B^i(\lda)\,dt$ and $G^i(\lda)=C^i(\lda)\,dx + D^i(\lda)\,dt$, $i=1,2,3$ are Hamiltonian one-forms in the sense of Definition \ref{def_Ham_form}, extended pointwise in $\lda$. In this case, we define 
	\be	
	\label{def_matrix_cov_PB}
	\cpb{F_1(\lda)}{G_2(\mu)}\equiv \cpb{F^i(\lda)}{G^j(\mu)}\,\sigma_i\otimes \sigma_j\,.
	\ee
\end{definition}
Equipped with this definition, we can now formulate the main result of this paper: the covariant Poisson bracket structure of the integrable field theory under consideration is governed by the same classical $r$-matrix that governs the space Poisson bracket in the traditional (non covariant) Hamiltonian approach to the integrable field theory.
\begin{theorem}
	\label{main_th}
The Lax form $W(\lda)$ is a Hamiltonian one-form with respect to $\Omega$. It satisfies
\be
\label{cov_r_PB}
\cpb{W_1(\lda)}{W_2(\mu)}=[r_{12}(\lda,\mu),W_1(\lda)+W_2(\mu)]\,,
\ee
where the right-hand side is understood as 
\be
\label{def_RHS}
[r_{12}(\lda,\mu),W_1(\lda)+W_2(\mu)]=[r_{12}(\lda,\mu),U_1(\lda)+U_2(\mu)]\,dx +[r_{12}(\lda,\mu),V_1(\lda)+V_2(\mu)]\,dt\,.
\ee
\end{theorem}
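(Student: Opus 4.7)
My plan is to reduce the theorem to two single-time $r$-matrix identities via the splitting established in Proposition \ref{splitting}, rather than attacking \eqref{cov_r_PB} by a direct bare-hands expansion.

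First, I would verify that $W(\lambda)$ is a Hamiltonian one-form in the sense of Definition \ref{extension_lda}. Writing $U(\lambda)=U^i(\lambda)\sigma_i$ and $V(\lambda)=V^i(\lambda)\sigma_i$ with Pauli components in ${\cal A}$, I need each $W^i(\lambda)=U^i(\lambda)\,dx+V^i(\lambda)\,dt$ to satisfy the field-dependence constraint from Proposition \ref{dependence_coords}: the coefficient of $dx$ may only depend on the variables $v_j$ with $j\in J$ that figure in $\Omega_t$, and the coefficient of $dt$ only on those $v_i$ with $i\in I$ figuring in $\Omega_x$. For a Zakharov-Shabat pair of the models treated here, $U(\lambda)$ encodes the Cauchy data while $V(\lambda)$ involves the complementary ``conjugate'' fields; I would check case by case that this matches the index sets $J$ and $I$ produced by the multisymplectic form. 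Construction of an actual Hamiltonian multivector field solving $\ip{X}{\Omega}=\delta W^i(\lambda)$ follows by inverting $\omega_t^{ij}$ and $\omega_x^{ij}$ pointwise in $\lambda$.

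Next, I would apply Proposition \ref{splitting} componentwise, using the tensor-indexed covariant bracket of Definition \ref{extension_lda}, to obtain
\begin{equation}
\cpb{W_1(\lambda)}{W_2(\mu)} = \{V_1(\lambda),V_2(\mu)\}_T\,dt - \{U_1(\lambda),U_2(\mu)\}_S\,dx.
\end{equation}
This reduces the theorem to establishing the two ``single-time'' identities
\begin{equation}
\{U_1(\lambda),U_2(\mu)\}_S = -[r_{12}(\lambda,\mu),U_1(\lambda)+U_2(\mu)],\qquad \{V_1(\lambda),V_2(\mu)\}_T = [r_{12}(\lambda,\mu),V_1(\lambda)+V_2(\mu)].
\end{equation}
Once those hold, the right-hand side reassembles, via the expansion \eqref{def_RHS}, exactly as $[r_{12},W_1+W_2]$, and the opposite signs carried by the $dx$ and $dt$ pieces of Proposition \ref{splitting} are exactly what match the ``same $r$-matrix up to a sign'' pattern noted in the introduction.

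The hard part is this last pair of identities, which is genuinely model-dependent. My strategy would be to identify $\{~,~\}_S$ and $\{~,~\}_T$, read off from the bivectors $P^S,P^T$ in Definition \ref{def_single_time} by inverting $\omega_t,\omega_x$, with the two Poisson brackets obtained in \cite{CK,C,ACDK} via the standard and the covariant Legendre transforms, where the $r$-matrix structure has been established. The advantage of the multisymplectic route is that, as remarked in the text after Definition \ref{def_single_time}, no Dirac reduction is needed: the non-degenerate forms $\Omega_t,\Omega_x$ are delivered directly from the Lagrangian, so the identification with the brackets of \cite{CK,ACDK} is a finite comparison of local expressions. The remaining check then amounts to verifying, for each prototypical model (sine-Gordon, NLS, mKdV), that the explicit $r$-matrix of the relevant type (trigonometric or rational AKNS) satisfies the two relations above for the given $U(\lambda),V(\lambda)$. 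This is the content pushed into Section \ref{Examples}; the theorem statement itself then follows by the clean assembly described above.
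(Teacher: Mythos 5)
Your proposal follows essentially the same route as the paper: the theorem is established there precisely by checking that the components $W^i(\lda)$ are Hamiltonian forms, invoking the splitting property of Proposition \ref{splitting} to reduce $\cpb{W_1(\lda)}{W_2(\mu)}$ to the two single-time brackets of $U$ and $V$, and then performing the model-dependent verification against $[r_{12}(\lda,\mu),W_1(\lda)+W_2(\mu)]$ example by example in Section \ref{Examples}. The only cosmetic difference is the direction of the logic (you derive the theorem from the two single-time identities of Corollary \ref{UU_VV}, whereas the paper states the theorem and obtains the corollary as a by-product), which is an equivalent reorganisation given the splitting formula.
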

As illustrated on the examples in the next Section, this result is obtained by direct calculation which is facilitated by the splitting property of Proposition \ref{splitting}. In fact, thanks to Definition \ref{extension_lda}, we can readily extend the validity of Proposition \ref{splitting} to $\lda$-dependent 1-forms with $\text{sl}(2,{\cal A})$-valued coefficients. 
In particular, we have the following corollary
\begin{corollary}\label{UU_VV}
The components $U(\lda)$ and $V(\lda)$ of the Lax form satisfy
\bea
&&\{U_1(\lambda),U_2(\mu)\}_S= -[r_{12}(\lda,\mu),U_1(\lda)+U_2(\mu)]\,,\\
&&\{V_1(\lambda),V_2(\mu)\}_T= [r_{12}(\lda,\mu),V_1(\lda)+V_2(\mu)]\,.  
\eea
where $\{~,~\}_{S,T}$ are the two single time Poisson brackets of Definition \ref{def_single_time}.
\end{corollary}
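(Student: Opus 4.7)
The plan is to deduce the corollary as an immediate consequence of Theorem \ref{main_th} together with Proposition \ref{splitting}. The key observation is that the Lax form $W(\lambda) = U(\lambda)\,dx + V(\lambda)\,dt$ is precisely of the shape $F = A\,dx + B\,dt$ to which Proposition \ref{splitting} applies, with $A = U(\lambda)$ and $B = V(\lambda)$ (and analogously with $\mu$ in the second auxiliary space).

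First I would extend Proposition \ref{splitting} to $\lambda$-dependent $\text{sl}(2,\mathcal{A})$-valued coefficients, as the remark following Theorem \ref{main_th} indicates is legitimate. By Definition \ref{extension_lda}, each scalar component $W^i(\lambda) = U^i(\lambda)\,dx + V^i(\lambda)\,dt$ in the Pauli basis is a Hamiltonian one-form (guaranteed by the first assertion of Theorem \ref{main_th}), so the scalar form of Proposition \ref{splitting} applies to each pair $(W^i(\lambda),W^j(\mu))$. Multiplying by $\sigma_i \otimes \sigma_j$, summing, and using bilinearity of $\{\cdot,\cdot\}_S$ and $\{\cdot,\cdot\}_T$ yields the matrix-valued identity
\begin{equation}
\cpb{W_1(\lambda)}{W_2(\mu)} = \{V_1(\lambda),V_2(\mu)\}_T\,dt - \{U_1(\lambda),U_2(\mu)\}_S\,dx\,.
\end{equation}

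Second, I would compare this with the right-hand side of \eqref{cov_r_PB} in the decomposed form \eqref{def_RHS}. Since $dx$ and $dt$ are independent horizontal one-forms on $M$, their coefficients must match separately. The $dt$ coefficient yields $\{V_1(\lambda),V_2(\mu)\}_T = [r_{12}(\lambda,\mu), V_1(\lambda)+V_2(\mu)]$, while the $dx$ coefficient yields $-\{U_1(\lambda),U_2(\mu)\}_S = [r_{12}(\lambda,\mu), U_1(\lambda)+U_2(\mu)]$, which rearranges to the first claim with the expected opposite sign. These are exactly the two identities in the corollary.

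No substantive obstacle is expected; the only mild point is justifying the tensorial and spectral-parameter extension of Proposition \ref{splitting}. This is bookkeeping: the parameters $\lambda,\mu$ are inert with respect to the covariant bracket (treated as constants, as emphasised in Section \ref{r_mat}), and the $\text{sl}(2)$ structure factors through the Pauli basis exactly as encoded in Definition \ref{extension_lda}, so no new structural argument is needed beyond the scalar statement of Proposition \ref{splitting}.
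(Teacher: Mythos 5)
Your proposal is correct and follows essentially the same route as the paper: apply the splitting formula of Proposition \ref{splitting} (extended to $\lambda$-dependent $\text{sl}(2,{\cal A})$-valued one-forms via Definition \ref{extension_lda}) to the Lax form, then match the $dx$ and $dt$ coefficients against \eqref{cov_r_PB}--\eqref{def_RHS}, with the signs coming out exactly as in the corollary. Your additional remarks on the componentwise extension are just a more explicit spelling-out of what the paper leaves implicit.
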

\begin{proof}
	A direct consequence of the splitting formula \eqref{eq_split} is 
	\be
	\cpb{W_1(\lda)}{W_2(\mu)}=\{V_1(\lambda),V_2(\mu)\}_T\, dt-  \{U_1(\lambda),U_2(\mu)\}_S\,dx\,.
	\ee
	It remains to compare with \eqref{cov_r_PB}-\eqref{def_RHS}.
\end{proof}
It is well known that the validity of the Jacobi identity for a Poisson bracket given by an $r$-matrix structure is ensured by the fact that $r$ satisfies the so-called classical Yang-Baxter equation
\begin{eqnarray}
\label{cYBE}
[\ r_{13}(\lda,\nu)\ ,\ r_{23}( \mu,\nu)\ ]+ [\ r_{12}(\lda,\mu)\ ,\ r_{13}(\lda,\nu)\ ]+ [\ r_{12}(\lda,\mu)\ ,\ r_{23}( \mu,\nu)\ ]=0\,.
\end{eqnarray}
The same holds for our covariant $r$-matrix structure in \eqref{cov_r_PB}. In our examples, two of the most famous solutions of \eqref{cYBE} will be used: the trigonometric one (for sine-Gordon) and the rational one (for NLS and modified KdV). 

The single-time Poisson structures of Corollary \ref{UU_VV} constitute the main observation initially made in \cite{CK,C}, further investigated in \cite{ACDK} and established in more generality in \cite{AC}. Unifying them into a coherent covariant Poisson bracket was the main motivation for the present work. Our construction reproduces them as a by-product as desired. A few comments are in order though. Here, we have obtained them as a byproduct of the covariant Poisson bracket formalism which treat a field theory as a finite dimensional system. In other words, the fibres with local coordinates $u_k$, $k=1,\dots,N$ are finite dimensional manifolds. In contrast, in \cite{CK,C,ACDK,AC}, the infinite dimensional point of view of the Hamiltonian formalism was taken and the Legendre transform was taken separately with respect to the time and space variable, resulting in
\bea
\label{UU}
&&\{U_1(x,\lda),U_2(y,\mu)\}_S=\delta(x-y)\,[r_{12}(\lda,\mu),U_1(x,\lda)+U_2(y,\mu)]\,,\\
\label{VV}
&&\{V_1(t,\lda),V_2(\tau,\mu)\}_T=-\delta(t-\tau)\,[r_{12}(\lda,\mu),V_1(t,\lda)+V_2(\tau,\mu)]\,,
\eea
with the $\delta$ distribution characteristic of infinite dimensional phase spaces. Another comment is that the reader might wonder why there is an overall minus sign between the Poisson structures in Corollary \ref{UU_VV} and \eqref{UU}-\eqref{VV}. This is due to a different convention between the present paper, where we followed Kanatchikov's sign convention in \eqref{def_cov_PB}, and the conventions used \cite{CK,C,ACDK,AC}. This amounts to changing $r$ to $-r$ which is of no consequence. However, the relative sign between between \eqref{UU} and \eqref{VV} is derived consistently in the present approach. It was unexplained originally in \cite{CK,C} but now finds an explanation in the form of the splitting property: it is dictated by the multisymplectic formalism behind the definition of the covariant Poisson bracket and its connection to single time structures. 

Let us now review the covariant Hamiltonian description of a field theory, thereby justifying the terminology covariant Poisson bracket and covariant Hamiltonian formalism used so far. The following is based on a combination of ideas and objects that can be found in \cite{K} and \cite{Dickey}. Let us introduce the energy-momentum tensors 
\begin{equation}
\label{EM_tensor}
T_\nu= - \ip{\partial_\nu}{ \Lambda }+ \ip{\tilpartial_\nu}{ \Omega^{(1)}}, \qquad  \nu =x,t,
\end{equation}
where the vector field $\tilpartial_\nu$ is defined by
\be
\tilpartial_\nu=\sum_{k,(\mu)}u_k^{(\mu)+e_\nu}\,\partial_{u_k^{(\mu)}}\,.
\ee
These tensors are (horizontal) one-forms that we can write as
\be
T_x= T_{xx}\, dt - T_{xt}\, dx\,,~~T_t= T_{tx}\, dt - T_{tt}\, dx\,.
\ee
The covariant Hamiltonian is the (horizontal) two-form defined by
\be
\mathcal{H}=dx\wedge \ip{\tilpartial_x}{\omegaone}+dt\wedge \ip{\tilpartial_t}{\omegaone}-\Lambda\,.
\ee
Then one can show that if the Lagrangian form does not depend on the spacetime variables explicitly (autonomous systems) then the equation of motion $\varder{\Lambda}{u_k}=0$, $k=1,\dots,N$ are equivalent to the following covariant Hamilton's equations
\be
\delta \mathcal{H}=dx\wedge \ip{\tilpartial_x}{\Omega}+dt\wedge \ip{\tilpartial_t}{\Omega}\,.
\ee
However, the covariant nature of ${\cal H}$ and $\cpb{~}{~}$ is better appreciated with the following form of the equation of motion. 
Let us write 
\be
{\cal H}=H\,dx\wedge dt\,.
\ee
We can check that $H$ is always a Hamiltonian (zero) form so that we can always find a Hamiltonian vector field $X_H$ for it. Then, for each example in the next section, we can show the following.
\begin{proposition}
If $F$ is any Hamiltonian one-form then the Euler-Lagrange equations $\varder{\Lambda}{u_k}=0$, $k=1,\dots,N$ imply
\be
dF = \cpb{H}{F}\,dx\wedge dt\,.
\ee	
\end{proposition}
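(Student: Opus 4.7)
The plan is to reduce both sides of the claimed identity to a common local expression in the components of $F = F_1\,dx + F_2\,dt$ and compare. Expanding $dF$ directly using the definition of the horizontal differential on autonomous quantities yields
\[
dF = dF_1\wedge dx + dF_2\wedge dt = (\partial_x F_2 - \partial_t F_1)\,dx\wedge dt,
\]
which disposes of the left-hand side in one step.

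For the right-hand side, I would first extract a local expression for $\delta H$ from the covariant Hamilton equation. Writing $\mathcal{H} = H\,dx\wedge dt$ and substituting the splitting $\Omega = \Omega_t\wedge dx - \Omega_x\wedge dt$ into $\delta\mathcal{H} = dx\wedge\ip{\tilpartial_x}{\Omega}+dt\wedge\ip{\tilpartial_t}{\Omega}$, together with the purely vertical character of $\tilpartial_\mu$, and collecting the coefficient of $dx\wedge dt$, gives
\[
\delta H = \ip{\tilpartial_x}{\Omega_x} + \ip{\tilpartial_t}{\Omega_t},
\]
a local identity that holds precisely under the Euler-Lagrange equations. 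Rather than construct $X_H$ as a multivector field, I would exploit graded antisymmetry and the defining relation $\cpb{F}{H} = (-1)^{2-1}\ip{X_F}{\delta H}$ to rewrite $\cpb{H}{F} = -\cpb{F}{H} = \ip{X_F}{\delta H}$. Substituting the expression for $\delta H$ above and using the antisymmetry $\ip{X_F}{\ip{Y}{\omega}} = -\ip{Y}{\ip{X_F}{\omega}}$ valid for a vertical 2-form $\omega$ and vertical vectors $X_F,Y$, together with the consequences $\ip{X_F}{\Omega_t} = \delta F_1$ and $\ip{X_F}{\Omega_x} = -\delta F_2$ of the Hamiltonian condition $\ip{X_F}{\Omega} = \delta F$, one obtains
\[
\cpb{H}{F} = \ip{\tilpartial_x}{\delta F_2} - \ip{\tilpartial_t}{\delta F_1} = \partial_x F_2 - \partial_t F_1,
\]
where the final equality uses the explicit form of $\tilpartial_\mu$ and the assumption that $F_1,F_2$ do not depend on $(x,t)$ explicitly. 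Comparing with the expression for $dF$ concludes the argument.

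The main obstacle will be sign bookkeeping: the anticommutation of vertical and horizontal basis forms, the graded antisymmetry of $\cpb{\,}{\,}$, and the sign convention in the splitting of $\Omega$ all introduce easily mismanaged signs. A secondary point is the internal consistency of the two relations $\ip{X_F}{\Omega_t} = \delta F_1$ and $\ip{X_F}{\Omega_x} = -\delta F_2$ for a single vertical vector field $X_F$; this is guaranteed by Proposition \ref{dependence_coords}, which places the dependence of $F_1$ and $F_2$ on disjoint sectors on which $\Omega_t$ and $\Omega_x$ respectively act non-degenerately, so that $X_F$ can be assembled from independent contributions.
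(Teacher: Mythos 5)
Your argument is correct, but it is a genuinely different (and more general) route than the one taken in the paper. The paper does not give an abstract proof of this proposition: it is established ``for each example'' by explicit computation, i.e.\ by constructing the multivector field $X_H$ for the specific model, evaluating $\cpb{H}{W(\lda)}=\ip{X_H}{\delta W(\lda)}$ term by term, and comparing with $dW(\lda)$; in particular the verification there really only treats $F=W(\lda)$. You instead derive the statement for an \emph{arbitrary} Hamiltonian one-form by (i) invoking the covariant Hamilton equation $\delta\mathcal{H}=dx\wedge\ip{\tilpartial_x}{\Omega}+dt\wedge\ip{\tilpartial_t}{\Omega}$ (asserted in the paper to be equivalent to the Euler--Lagrange equations for autonomous Lagrangians) to get the on-shell identity $\delta H=\ip{\tilpartial_x}{\Omega_x}+\ip{\tilpartial_t}{\Omega_t}$, (ii) using graded antisymmetry to trade $X_H$ for $X_F$, and (iii) exploiting the splitting $\ip{X_F}{\Omega_t}=\delta F_1$, $\ip{X_F}{\Omega_x}=-\delta F_2$, which is legitimate since degree counting forces $X_F$ to be purely vertical for a one-form $F$. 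I checked your intermediate identities against the sine-Gordon data (e.g.\ $\delta H=\ip{\tilpartial_x}{\Omega_x}+\ip{\tilpartial_t}{\Omega_t}$ holds precisely modulo $\phi_{tt}-\phi_{xx}+\frac{m^2}{\beta}\sin\beta\phi=0$) and the signs work out with the convention $\Omega=\Omega_t\wedge dx-\Omega_x\wedge dt$ of \eqref{def_multi_form} --- note the paper itself is inconsistent between \eqref{def_multi_form} and \eqref{split_Omega}, so you were right to fix one convention and stick to it. What your approach buys is a model-independent proof of the proposition as actually stated (for any Hamiltonian one-form) and a clean avoidance of the multivector field $X_H$; what it costs is reliance on the unproved-in-the-paper equivalence between the Euler--Lagrange and covariant Hamilton equations, which the paper's direct computations do not need.
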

To obtain the covariant Hamiltonian description of the Euler-Lagrange equations, we need a converse to this statement. In all the examples considered in this paper, we can explicitely establish the following fact.
\begin{proposition}
The Euler-Lagrange equations of motion are equivalent to the covariant Hamilton equation of motion for the Lax form\footnote{Here, in line with Definition \ref{extension_lda}, $\cpb{H}{W(\lda)}$ means $\cpb{H}{W^i(\lda)}\sigma_i$.}
\be
dW(\lda)=\cpb{H}{W(\lda)}\,dx\wedge dt\,.
\ee
Moreover, we have\footnote{The notation $W(\lda)\wedge W(\lda)$ is a shorthand for the usual operation on Lie algebra-valued forms. In our case, we have an associative product for the components of $W(\lda)$ (matrix multiplication). In short, $W(\lda)\wedge W(\lda)=\left(U(\lda)V(\lda)-V(\lda)U(\lda)\right)dx\wedge dt=\left[U(\lda),V(\lda) \right]dx\wedge dt$}
\be
\cpb{H}{W(\lda)}=W(\lda)\wedge W(\lda)\,.
\ee
Therefore the Maurer-Cartan equation characterising the zero curvature condition of the Lax connection, \ie $U_t(\lda)-V_x(\lda)+[U(\lda),V(\lda)]=0$, is derived as a covariant Hamiltonian equation for the Hamiltonian $H$ and the covariant Poisson bracket $\cpb{~}{~}$.
\end{proposition}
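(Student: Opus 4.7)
The plan is to combine the preceding proposition with a direct computation of $\cpb{H}{W(\lda)}$, and then invoke the defining property of the Zakharov--Shabat Lax pair. Since Theorem \ref{main_th} asserts that $W(\lda)$ is a Hamiltonian one-form with respect to $\Omega$, the preceding proposition applied to $F=W(\lda)$ delivers at once the forward implication of the equivalence: assuming the Euler--Lagrange equations, $dW(\lda)=\cpb{H}{W(\lda)}\,dx\wedge dt$. The remaining content is the identity $\cpb{H}{W(\lda)}=W(\lda)\wedge W(\lda)$ and the converse implication.

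First I would compute $dW(\lda)=(\partial_x V(\lda)-\partial_t U(\lda))\,dx\wedge dt$ from the rules of the horizontal differential, and note that $W(\lda)\wedge W(\lda)=[U(\lda),V(\lda)]\,dx\wedge dt$ by the shorthand of the footnote. The key step is then to compute $\cpb{H}{W(\lda)}$ directly, without invoking the equations of motion. The Hamiltonian multivector field $X_H$ associated to the zero-form $H$ is a $(1,1)$-multivector determined by $\ip{X_H}{\Omega}=\delta H$, where $H$ is read off the explicit expression $\mathcal{H}=dx\wedge \ip{\tilpartial_x}{\omegaone}+dt\wedge \ip{\tilpartial_t}{\omegaone}-\Lambda$. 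Given $X_H$, Definition \ref{def_cov_PB}, together with an extension of the splitting principle of Proposition \ref{splitting} to the case of one zero-form and one one-form argument, reduces the calculation coefficient by coefficient in $dx$ and $dt$ to the single-time brackets of $H$ with the entries of $U(\lda)$ and $V(\lda)$ built from the inverses of $\Omega_t$ and $\Omega_x$ as in Definition \ref{def_single_time}. Carrying this out for each of the three Zakharov--Shabat examples should produce the identification with $[U(\lda),V(\lda)]$.

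With $\cpb{H}{W(\lda)}=[U(\lda),V(\lda)]$ established, the covariant Hamilton equation $dW(\lda)=\cpb{H}{W(\lda)}\,dx\wedge dt$ becomes $\partial_x V(\lda)-\partial_t U(\lda)=[U(\lda),V(\lda)]$, which is precisely the Maurer--Cartan zero-curvature equation for $W(\lda)$. The converse implication of the equivalence then follows from the defining property of the chosen Zakharov--Shabat Lax pair: by construction, zero curvature for $(U(\lda),V(\lda))$ is equivalent to the Euler--Lagrange system of the underlying Lagrangian, a property that must be checked for each specific model but which holds in all the examples of interest.

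The principal obstacle is the explicit determination of the Hamiltonian multivector field $X_H$ and the subsequent evaluation of $\ip{X_H}{\delta W(\lda)}$. Because $H$ mixes horizontal and vertical information, and because its Hamiltonian multivector field involves both vertical generators $\partial_{u_k^{(\mu)}}$ and horizontal generators $\partial_\nu$, the bookkeeping is intricate and genuinely depends on the jet content of $\cL$ and on the structure of the chosen Lax pair. This is why the paper establishes the identity example by example rather than aiming for a universal, coordinate-free formulation.
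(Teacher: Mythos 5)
Your proposal is correct and follows essentially the same route as the paper: the paper also establishes the result example by example, computing $H$ and its Hamiltonian multivector field $X_H$ explicitly, evaluating $\cpb{H}{W(\lda)}=\ip{X_H}{\delta W(\lda)}$ off-shell, identifying it with $[U(\lda),V(\lda)]$, and reading off the equivalence with the Euler--Lagrange equations through the zero-curvature formulation. Your use of the preceding proposition for the forward implication is a minor reorganization of the same computation, not a different argument.
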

The relation between the Maurer-Cartan equation and a Lax connection is well-known of course. However, this is the first time that this is derived as a covariant Hamilton equation.

\section{Examples}\label{Examples}

\subsection{A relativistic example: the Sine-Gordon model in laboratory coordinates}

The sine-Gordon model for the real scalar field $\phi(x,t)$ reads
\begin{equation}\label{SG}
\phi_{tt}-\phi_{xx} + \frac{m^2}{\beta} \sin \beta \phi=0\,.
\end{equation}
A Lagrangian form for it is given by
\begin{equation}
\Lambda= [ \frac{1}{2}(\phi_t^2-\phi_x^2)- \frac{m^2}{\beta^2}(1-\cos \beta \phi)]\, \vol\,.
\end{equation}
Equation \eqref{SG} is equivalent to the following zero curvature equation which should hold as an identity in $\lda$
\be
\partial_t U(\lda)-\partial_x V(\lda)+[U(\lda),V(\lda)]=0\,.
\ee
where the Lax pair $(U,V)$ can be taken as
\begin{equation}
U(\lambda)= -ik_0(\lambda) \sin{\frac{\beta \phi}{2}} \sigma_1 - i k_1(\lambda) \cos{\frac{\beta \phi}{2}}\sigma_2 - \frac{i\beta}{4}\phi_t \sigma_3\,,
\end{equation}
\begin{equation}
V(\lambda)= -ik_1(\lambda) \sin{\frac{\beta \phi}{2}} \sigma_1 - i k_0(\lambda) \cos{\frac{\beta \phi}{2}}\sigma_2 - \frac{i\beta}{4}\phi_x \sigma_3\,,
\end{equation}
where $k_0(\lambda) = \frac{m}{4}(\lambda + \lambda^{-1})$ and $ k_1(\lambda) = \frac{m}{4}(\lambda - \lambda^{-1})$. 
In the general notations of Section \ref{generalities}, here $N=1$, $m=1$, and the only field is $u_1=\phi$. We will denote $u_k^{(i)}$, $(i)=(0,0)$, $(1,0)$, $(0,1)$, etc. as $\phi$, $\phi_x$, $\phi_t$, etc. for convenience. It is important to remember that $\phi_x$, $\phi_t$, etc. should be treated as coordinates in the differential algebra ${\cal A}$ when performing the calculations in the variational bicomplex.
\begin{proposition}
The form $\omegaone$ is given by 
\begin{equation}
\omegaone= -\phi_t \delta \phi \^ dx - \phi_x \delta \phi \^ dt\,.
\end{equation}
and the multisymplectic form reads
\begin{equation}
\Omega = - \delta \phi_t \wedge \delta \phi \wedge dx - \delta \phi_x \wedge \delta \phi \wedge dt\,.
\end{equation}
\end{proposition}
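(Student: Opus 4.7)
The approach is to apply directly the procedure established in Proposition 1 (and illustrated in its proof) to the sine-Gordon Lagrangian density $\mathcal{L} = \frac{1}{2}(\phi_t^2-\phi_x^2)-\frac{m^2}{\beta^2}(1-\cos\beta\phi)$. First I would compute
\begin{equation}
\delta\Lambda = \left(-\frac{m^2}{\beta}\sin(\beta\phi)\,\delta\phi - \phi_x\,\delta\phi_x + \phi_t\,\delta\phi_t\right)\wedge dx\wedge dt
\end{equation}
by treating $\phi,\phi_x,\phi_t$ as independent generators of $\mathcal{A}$ and using the definition of $\delta$ on functions. The task is then to integrate by parts the two terms containing $\delta\phi_x$ and $\delta\phi_t$ so as to isolate a piece proportional to $\delta\phi\wedge dx\wedge dt$ (which yields the variational derivative and hence the sine-Gordon equation as a consistency check) plus an exact horizontal differential $-d\omegaone$.

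The engine for the integration by parts is the identity $d(\delta\phi) = -\delta\phi_x\wedge dx-\delta\phi_t\wedge dt$ from \eqref{ddeltagenerator}. Applying the graded Leibniz rule to $d(\phi_x\,\delta\phi\wedge dt)$ and $d(\phi_t\,\delta\phi\wedge dx)$ produces exactly the terms $\phi_x\,\delta\phi_x\wedge dx\wedge dt$ and $\phi_t\,\delta\phi_t\wedge dx\wedge dt$, up to terms proportional to $\phi_{xx}\,\delta\phi\wedge dx\wedge dt$ and $\phi_{tt}\,\delta\phi\wedge dx\wedge dt$ respectively. Substituting these back into $\delta\Lambda$ gives
\begin{equation}
\delta\Lambda = \left[\phi_{xx}-\phi_{tt}-\frac{m^2}{\beta}\sin(\beta\phi)\right]\delta\phi\wedge dx\wedge dt + d\bigl(\phi_x\,\delta\phi\wedge dt + \phi_t\,\delta\phi\wedge dx\bigr),
\end{equation}
which is precisely of the form \eqref{Lambda_variation}. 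Reading off the boundary term with the right sign yields $\omegaone = -\phi_t\,\delta\phi\wedge dx - \phi_x\,\delta\phi\wedge dt$, as claimed.

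For the multisymplectic form, I then apply $\delta$ to $\omegaone$. Because $\delta$ is a graded derivation that annihilates $dx$, $dt$ and squares to zero (so $\delta(\delta\phi)=0$), only the coefficients $\phi_x,\phi_t$ contribute, giving $\delta\omegaone = -\delta\phi_t\wedge\delta\phi\wedge dx - \delta\phi_x\wedge\delta\phi\wedge dt$, which is the stated $\Omega$.

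The only real obstacle is bookkeeping of signs in the graded-anticommutative algebra: one must consistently treat $\delta\phi$ and $dx$, $dt$ as anticommuting 1-forms, use the fact that $d$ and $\delta$ anticommute, and keep track of the orientation $dx\wedge dt$ versus $dt\wedge dx$ when the integration-by-parts identity is applied. Once these conventions are fixed, the computation is mechanical and the answer is uniquely determined (modulo an exact form $d\omega^{(1,0)}$, which does not affect $\Omega$).
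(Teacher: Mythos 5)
Your proposal is correct and follows essentially the same route as the paper's own proof: compute $\delta\Lambda$, integrate by parts via $d(\delta\phi)=-\delta\phi_x\wedge dx-\delta\phi_t\wedge dt$ to isolate the Euler--Lagrange term plus an exact piece, read off $\omegaone$ from the sign convention in \eqref{Lambda_variation}, and apply $\delta$ to get $\Omega$. The signs and bookkeeping you describe match the paper's computation exactly.
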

\begin{proof}
The $\delta$ variation of $\Lambda$ is
\begin{equation}
\delta \Lambda = [\phi_t \delta \phi_t - \phi_x \delta \phi_x -\frac{m^2}{\beta} \sin \beta \phi \delta \phi] \wedge\vol.
\end{equation}
Now, using \eqref{ddeltagenerator}, which in this case means $d(\delta \phi) = - \delta \phi_x \wedge dx -\delta \phi_t \wedge dt$, we get that $d(\phi_t \delta \phi \wedge dx) = \phi_{tt} dt \wedge \delta \phi \wedge dx + \phi_t d(\delta \phi) \wedge dx = \phi_{tt} \delta \phi \wedge \vol +\phi_t \delta \phi_t \wedge \vol$, and therefore 
\begin{equation}
\phi_t \delta \phi_t \wedge \vol = -\phi_{tt} \delta \phi \wedge \vol + d(\phi_t \delta \phi \wedge dx),
\end{equation}
and equivalently
\begin{equation}
-\phi_x \delta \phi_x \wedge \vol = \phi_{xx} \delta \phi \wedge \vol + d ( \phi_x \delta \phi \wedge dt).
\end{equation}
Therefore, the variation of $\Lambda$ brings
\begin{equation}
\delta \Lambda =[- \phi_{tt} + \phi_{xx} - \frac{m^2}{\beta} \sin \beta \phi ]\delta \phi \wedge \vol + d( \phi_t \delta \phi \^ dx + \phi_x \delta \phi \^ dt).
\end{equation}
By looking at $\varder{\Lambda}{\phi} =0$ we obtain the Sine-Gordon equation. $\omegaone$ then reads
\begin{equation}
\omegaone= -\phi_t \delta \phi \^ dx - \phi_x \delta \phi \^ dt.
\end{equation}
Its $\delta$-differential $\delta \omegaone$ is defined to be the multisymplectic form $\Omega$
\begin{equation}\label{msform}
\Omega= \delta \omegaone = - \delta \phi_t \wedge \delta \phi \wedge dx - \delta \phi_x \wedge \delta \phi \wedge dt.
\end{equation}
\end{proof}
Equipped with the multisymplectic form $\Omega$ we can define the covariant Poisson bracket and also the two ``single-time'' Poisson brackets.
\begin{proposition}
	A Hamiltonian one-form for the SG equation is $F=F^1 (\phi,\phi_t)dx + F^2(\phi,\phi_x)dt$ where
	\begin{equation}
	\parder{F^1}{\phi_t} = \parder{F^2}{\phi_x}
	\end{equation}
	The respective vector field is
	\begin{equation}
	X_F = \parder{F^1}{\phi_t} \partial_\phi - \parder{F^2}{\phi} \partial_{\phi_x} - \parder{F^1}{\phi} \partial_{\phi_t}\,.
	\end{equation}
	For any two Hamiltonian one-forms $F=Adx + Bdt$ and $G=Cdx + Ddt$, we have following splitting formula
	\begin{equation}
	\label{SG_split}
	\cpb{F}{G} = \{B,D\}_T \,dt- \{A,C\}_S\,dx  
	\end{equation} 
	where the single-time Poisson Brackets are given by
	\begin{equation}
	\{A, C \}_S = \parder{A}{\phi_t}\parder{C}{\phi}-\parder{A}{\phi} \parder{C}{\phi_t} , \qquad \{B, D \}_T = \parder{B}{\phi}\parder{D}{\phi_x} - \parder{D}{\phi}\parder{B}{\phi_x}.
	\end{equation}
\end{proposition}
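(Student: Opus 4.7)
The plan is to unpack the general machinery of Sections \ref{generalities} and \ref{main_result} in this specific example. The multisymplectic form $\Omega$ involves the basis vertical one-forms $\delta\phi,\delta\phi_t,\delta\phi_x$; labelling them $v_1,v_2,v_3$ respectively, I can rewrite $\Omega$ in the canonical form \eqref{msform:localcoordinates} and read off the index sets $J=\{1,2\}$ (from the piece multiplying $dx$) and $I=\{1,3\}$ (from the piece multiplying $dt$). By Proposition \ref{dependence_coords} this immediately forces $F^1=F^1(\phi,\phi_t)$ and $F^2=F^2(\phi,\phi_x)$, which is the first claim.

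To determine the Hamiltonian vector field, I would take a purely vertical ansatz $X_F=a\,\partial_\phi+b\,\partial_{\phi_t}+c\,\partial_{\phi_x}$ (any horizontal part lies in the kernel of $\Omega$ up to the splitting used here) and compute $\ip{X_F}{\Omega}$ term by term using the graded interior-product rules of Section \ref{generalities}. Matching independently the coefficients in $\ip{X_F}{\Omega}=\delta F$: the $\delta\phi_t\wedge dx$ coefficient forces $a=\parder{F^1}{\phi_t}$, the $\delta\phi_x\wedge dt$ coefficient forces $a=\parder{F^2}{\phi_x}$, and these are consistent precisely under the compatibility condition $\parder{F^1}{\phi_t}=\parder{F^2}{\phi_x}$; the remaining coefficients of $\delta\phi\wedge dx$ and $\delta\phi\wedge dt$ then yield $b=-\parder{F^1}{\phi}$ and $c=-\parder{F^2}{\phi}$, producing the stated $X_F$.

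For the splitting formula, I use Definition \ref{def_cov_PB} with $p=1$ in the form $\cpb{F}{G}=-\ip{X_F}{\delta G}$, and substitute the explicit $X_F$ just obtained along with $\delta G=(\parder{C}{\phi}\delta\phi+\parder{C}{\phi_t}\delta\phi_t)\wedge dx+(\parder{D}{\phi}\delta\phi+\parder{D}{\phi_x}\delta\phi_x)\wedge dt$. The contraction produces a $dx$-piece depending only on $(\phi,\phi_t)$-derivatives of $A$ and $C$ and a $dt$-piece depending only on $(\phi,\phi_x)$-derivatives of $B$ and $D$. The ``cross'' contribution coming from $a\,\partial_\phi\lrcorner(\parder{D}{\phi}\delta\phi\wedge dt)$ is rewritten using the compatibility condition $\parder{A}{\phi_t}=\parder{B}{\phi_x}$ for $F$, yielding exactly the two combinations announced as $\{A,C\}_S$ and $\{B,D\}_T$.

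Finally, I will check consistency with Definition \ref{def_single_time} applied to $\Omega_t=\delta\phi\wedge\delta\phi_t$ and $\Omega_x=\delta\phi_x\wedge\delta\phi$. Each is a two-dimensional non-degenerate two-form whose inverse Poisson bivector is read off by inspection, and the induced single-time bracket coincides term by term with the formulas stated in the proposition. The main obstacle throughout is careful sign-bookkeeping with the graded interior product, in particular the orderings $\delta\phi_x\wedge\delta\phi$ versus $\delta\phi\wedge\delta\phi_x$ and the overall sign $(-1)^{2-p}$ built into Definition \ref{def_cov_PB}; beyond that the proof is a direct, matter-of-fact calculation.
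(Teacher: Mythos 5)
Your proposal is correct and follows essentially the same route as the paper: a vertical ansatz for $X_F$, coefficient matching in $\ip{X_F}{\Omega}=\delta F$ to get the compatibility condition and the vector field, then direct contraction $\cpb{F}{G}=-\ip{X_F}{\delta G}$ using $\parder{A}{\phi_t}=\parder{B}{\phi_x}$ to produce the splitting. One small correction to your justification of the vertical ansatz: horizontal vector fields are \emph{not} in the kernel of $\Omega$ (e.g.\ $\ip{\partial_x}{\Omega}=-\delta\phi_t\wedge\delta\phi$); rather, such a contraction yields $(2,0)$-type terms that cannot occur in $\delta F$ for a $(0,1)$-form $F$, which forces the horizontal components to vanish and so leads to the same conclusion.
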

\begin{proof}
	Let us consider the following (vertical) vector field
	\begin{equation}
	X_F = A\partial_\phi + B \partial_{\phi_x}+ C \partial_{\phi_t}
	\end{equation}
	in the equation $\delta F = \ip{X_F}{\Omega}$. The left hand-side reads 
	\begin{equation}
	\delta F = \parder{F^1}{\phi}\delta \phi \wedge dx + \parder{F^1}{\phi_t} \delta \phi_t \wedge dx + \parder{F^2}{\phi} \delta \phi \wedge dt + \parder{F^2}{\phi_x} \delta \phi_x \wedge dt,
	\end{equation}
	while the right hand-side is
	\begin{equation}
	\ip{X_F}{\Omega} = A \delta \phi_t \wedge dx + A \delta \phi_x \wedge dt - B \delta \phi \wedge dt - C \delta \phi \wedge dx.
	\end{equation}
	A direct comparison shows
	\begin{equation}
	A =  \parder{F^1}{\phi_t} = \parder{F^2}{\phi_x}, \qquad  B = -\parder{F^2}{\phi}, \qquad C= -\parder{F^1}{\phi}.
	\end{equation}
	Then, \eqref{SG_split} follows by a direct calculation from $\cpb{F}{G}=- i(X_F) \delta G$ and recognizing the single-time Poisson brackets as defined in the Proposition.
	\end{proof}

\begin{theorem}\label{covariantrmatrix:SG}
The Lax form $W(\lda)=U(\lda)\,dx+V(\lda)\,dt$ satisfies the following covariant Poisson bracket 
\be
\cpb{W_1(\lambda)}{W_2(\mu)}=[r_{12}(\lda,\mu),W_1(\lambda)+W_2(\mu)]
\ee
where the classical $r$-matrix is that of the sine-Gordon model (see e.g. \cite{FT})
\begin{equation}
r_{12}(\lambda, \mu) = f (\lambda, \mu) ( \id \otimes \id - \sigma_3 \otimes \sigma_3) + g(\lambda,\mu) (\sigma_1 \otimes \sigma_1 + \sigma_2 \otimes \sigma_2),
\end{equation} 
with $f(\lambda,\mu)= - \frac{\beta^2}{16} \frac{\lambda^2 + \mu^2}{\lambda^2 - \mu^2}$ and $g(\lambda, \mu) = \frac{\beta^2}{8} \frac{\lambda \mu}{\lambda^2 - \mu^2}$.
\end{theorem}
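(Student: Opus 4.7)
The plan is to leverage the splitting property of the covariant Poisson bracket (just proven for the sine-Gordon multisymplectic form) extended pointwise in $\lambda$ and componentwise in the Pauli basis according to Definition \ref{extension_lda}. Applied to $W(\lambda)=U(\lambda)\,dx+V(\lambda)\,dt$, the formula \eqref{SG_split} yields
\begin{equation}
\cpb{W_1(\lambda)}{W_2(\mu)} = \{V_1(\lambda),V_2(\mu)\}_T\,dt - \{U_1(\lambda),U_2(\mu)\}_S\,dx\,,
\end{equation}
so that, in view of the definition of the right-hand side \eqref{def_RHS}, the theorem reduces to the two matrix identities $\{U_1(\lambda),U_2(\mu)\}_S = -[r_{12}(\lambda,\mu),U_1(\lambda)+U_2(\mu)]$ and $\{V_1(\lambda),V_2(\mu)\}_T = [r_{12}(\lambda,\mu),V_1(\lambda)+V_2(\mu)]$ of Corollary \ref{UU_VV}. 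Before that, one must check that each component of $W(\lambda)$ in the Pauli basis is a Hamiltonian one-form in the sense of the preceding proposition; this follows because the coefficient of $\sigma_3$ in $U$ depends only on $\phi_t$ and the coefficient of $\sigma_3$ in $V$ only on $\phi_x$ (so the cross-compatibility condition $\partial F^1/\partial \phi_t = \partial F^2/\partial \phi_x$ holds trivially for each Pauli component).

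Next I would compute the two single-time brackets by direct substitution into the explicit formulas for $\{\cdot,\cdot\}_S$ and $\{\cdot,\cdot\}_T$. The key observation simplifying the bookkeeping is that $\partial U/\partial \phi_t = -\frac{i\beta}{4}\sigma_3$ is a constant matrix, and $\partial U/\partial \phi$ lies in the span of $\sigma_1,\sigma_2$ with coefficients proportional to $k_0(\lambda),k_1(\lambda)$ times trigonometric functions of $\beta\phi/2$. Consequently $\{U_1(\lambda),U_2(\mu)\}_S$ is a sum of only mixed tensor terms $\sigma_3\otimes\sigma_1, \sigma_3\otimes\sigma_2$ and their transposes. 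The analogous computation for $\{V_1(\lambda),V_2(\mu)\}_T$ produces the same structure with the roles of $k_0$ and $k_1$ effectively exchanged in the appropriate slots, because $V$ is obtained from $U$ by the swap $k_0\leftrightarrow k_1$ and $\phi_t\to \phi_x$.

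In parallel, I would expand the right-hand sides $[r_{12},U_1+U_2]$ and $[r_{12},V_1+V_2]$ using $[\sigma_a,\sigma_b]=2i\varepsilon_{abc}\sigma_c$. The $\id\otimes\id$ piece of $r_{12}$ drops out, the $\sigma_3\otimes\sigma_3$ piece generates terms of the form $\sigma_3\otimes[\sigma_3,\cdot\,]$ and $[\sigma_3,\cdot\,]\otimes\sigma_3$ acting on the $\sigma_1,\sigma_2$ components of $U$ (or $V$), and the $\sigma_1\otimes\sigma_1+\sigma_2\otimes\sigma_2$ piece generates analogous mixed terms. Collecting everything in the Pauli basis produces exactly the same type of $\sigma_3\otimes\sigma_a + \sigma_a\otimes\sigma_3$ terms as the single-time brackets, so the verification reduces to matching a small number of scalar coefficients.

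The only real obstacle is combinatorial and ultimately collapses onto two algebraic identities in the spectral parameters, of the shape $k_0(\lambda)k_1(\mu)\pm k_0(\mu)k_1(\lambda)$. With $k_0(\lambda)=\tfrac{m}{4}(\lambda+\lambda^{-1})$ and $k_1(\lambda)=\tfrac{m}{4}(\lambda-\lambda^{-1})$, both reduce to simple rational functions of $\lambda^2-\mu^2$ and $\lambda\mu$, which match precisely the sine-Gordon $r$-matrix coefficients $f(\lambda,\mu)=-\tfrac{\beta^2}{16}\tfrac{\lambda^2+\mu^2}{\lambda^2-\mu^2}$ and $g(\lambda,\mu)=\tfrac{\beta^2}{8}\tfrac{\lambda\mu}{\lambda^2-\mu^2}$. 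Once these two scalar identities are verified, the proof of the theorem is complete.
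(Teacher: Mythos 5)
Your proposal is correct and follows essentially the same route as the paper: the paper likewise checks that the Pauli components of $W(\lambda)$ are Hamiltonian one-forms, invokes the splitting formula \eqref{SG_split} to reduce the covariant bracket to the two single-time brackets (equivalently, to the two identities of Corollary \ref{UU_VV}), and matches the result against the direct expansion of $[r_{12}(\lambda,\mu),W_1(\lambda)+W_2(\mu)]$ in the Pauli basis. The one detail your sketch glosses over is that the $g$-part of $r_{12}$ also generates $\sigma_1\otimes\sigma_2$ and $\sigma_2\otimes\sigma_1$ terms with coefficients proportional to $W^3(\lambda)-W^3(\mu)$, which drop out only because the $\sigma_3$-component of the Lax form is independent of the spectral parameter --- a cancellation your final coefficient-matching step would indeed catch.
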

\begin{proof}
The proof is done by straightforward but long calculations. We give the details for this first example.
We write $W (\lambda)= \sum_i W^i(\lambda) \sigma_i$, where $W^i(\lambda) = U^i(\lambda)\, dx + V^i(\lambda)\, dt$, so that
\bea
&&W^1(\lambda) = -ik_0(\lambda) \sin \frac{\beta\phi}{2} dx -ik_1(\lambda) \sin \frac{\beta\phi}{2} dt\,,\\
&&W^2(\lambda) = -ik_1(\lambda) \cos \frac{\beta\phi}{2} dx -ik_0(\lambda) \cos \frac{\beta\phi}{2} dt\,,\\
&&W^3(\lambda) = - \frac{i\beta}{4} \phi_t dx- \frac{i\beta}{4} \phi_x dt\,.
\eea
It can be checked that $W^i$, $i=1,2,3$ are Hamiltonian forms. Therefore, using the splitting property \ref{SG_split}, we find that the only non-zero Poisson brackets are
\begin{gather}
\cpb{W^1(\lambda)}{W^3(\mu)} = - \frac{\beta^2}{8} \cos \frac{\beta \phi}{2} ( k_0(\lambda) dx + k_1(\lambda) dt),\\
\cpb{W^2(\lambda)}{W^3(\mu)} =  \frac{\beta^2}{8} \sin \frac{\beta \phi}{2} ( k_1(\lambda) dx + k_0(\lambda) dt),\\
\cpb{W^3(\lambda)}{W^1(\mu)} =  \frac{\beta^2}{8} \cos \frac{\beta \phi}{2} ( k_0(\mu) dx + k_1(\mu) dt),\\
\cpb{W^3(\lambda)}{W^2(\mu)} = - \frac{\beta^2}{8} \sin \frac{\beta \phi}{2} ( k_1(\mu) dx + k_0(\mu) dt).
\end{gather}
Thus we deduce, according to the definition \eqref{def_matrix_cov_PB},
\begin{multline}
\label{expression_WW}
\cpb{W_1(\lambda)}{W_2(\mu)} =  \frac{\beta^2}{8} \left[-\cos \frac{\beta \phi}{2} ( k_0 (\lambda) dx + k_1(\lambda) dt)\, \sigma_1 \otimes \sigma_3 + \sin \frac{\beta \phi}{2} ( k_1 (\lambda) dx + k_0(\lambda) dt)\, \sigma_2 \otimes \sigma_3\right. \\ + \left. \cos \frac{\beta \phi}{2} ( k_0 (\mu) dx + k_1(\mu) dt)\, \sigma_3 \otimes \sigma_1 - \sin \frac{\beta \phi}{2} ( k_1 (\mu) dx + k_0(\mu) dt) \,\sigma_3 \otimes \sigma_2    \right].
\end{multline}
On the other hand, we can also compute $[r_{12}(\lambda - \mu) , W_1(\lambda) + W_2(\mu)]$ directly, using the commutation rules $[\sigma_i,\sigma_j] = 2i \epsilon_{ijk} \sigma_k$ and the property $[A \otimes \id, B\otimes C] = [A , B] \otimes C$. We find 
\begin{align*}
&[r_{12}(\lambda - \mu) , W_1(\lambda) + W_2(\mu)]\\
=&[-f(\lambda,\mu) \sigma_3 \otimes \sigma_3 + g(\lambda,\mu) \sigma_1 \otimes \sigma_1 + g(\lambda,\mu)\sigma_2 \otimes \sigma_2, W^1(\lambda) \sigma_1 \otimes \id\\&+ W^2(\lambda) \sigma_2 \otimes \id + W^3(\lambda) \sigma_3 \otimes \id + W^1(\mu) \id \otimes \sigma_1 + W^2(\mu) \id \otimes \sigma_2 + W^3(\mu) \id \otimes \sigma_3]\\
=& -2i (f(\lambda, \mu) W^1(\lambda) + g(\lambda, \mu) W^1(\mu))\sigma_2 \otimes \sigma_3 + 2i (f(\lambda, \mu) W^2(\lambda) +g(\lambda, \mu) W^2(\mu))\sigma_1 \otimes \sigma_3 \\&- 2i (f(\lambda, \mu) W^1(\mu) + g(\lambda, \mu) W^1(\lambda))\sigma_3 \otimes \sigma_2 - 2i (f(\lambda, \mu) W^2(\mu) + g(\lambda, \mu) W^2(\lambda))\sigma_3 \otimes \sigma_1\\&+ 2i ( g(\lambda, \mu) W^3(\mu) -g(\lambda, \mu) W^3(\lambda) )\sigma_2 \otimes \sigma_1 + 2i (g(\lambda, \mu) W^3(\lambda) - g(\lambda, \mu) W^2(\mu))\sigma_1 \otimes \sigma_2.
\end{align*}
Upon inserting the explicit expressions of $W^i$, $f$ and $g$ one recovers \eqref{expression_WW} and the claim is proved.
\end{proof}

We conclude this section on the sine-Gordon model with its covariant Hamiltonian formulation. 
We first compute the energy-momentum tensors $T_x= T_{xx} dt - T_{xt} dx$ and $T_t= T_{tx} dt - T_{tt} dx$ according to \eqref{EM_tensor} to find $T_{xx} = -\frac{1}{2} \phi_t^2 -\frac{1}{2}\phi_x^2 + \frac{m^2}{\beta^2}(1-\cos \beta \phi)$ and $ T_{tt}= \frac{1}{2} \phi_t^2 +\frac{1}{2}\phi_x^2 + \frac{m^2}{\beta^2}(1-\cos \beta \phi)$. The covariant Hamiltonian $\mathcal{H} = H \vol$ can be computed as $\mathcal{H}= (T_{xx} +T_{tt} + \Lambda) \vol$ and is given by
\begin{equation}
H= \frac{1}{2}(\phi_t^2 - \phi_x^2) + \frac{m^2}{\beta^2} (1- \cos \beta \phi).
\end{equation}
The corresponding Hamiltonian vector field $X_H$ can be taken as
\begin{equation}
X_H = \phi_t \partial_\phi \wedge \partial_x - \phi_x \partial_\phi \wedge \partial_t - \frac{m^2}{2\beta} \sin\beta \phi (\partial_{\phi_t} \wedge \partial_x + \partial_{\phi_x} \wedge \partial_t).
\end{equation}
Let us now consider the Lax Form $W(\lambda)=U(\lambda)dx+ V(\lambda)dt$. On the one hand, we have
\begin{multline}
d W(\lambda) = ( (- i k_1(\lambda) \cos \frac{\beta \phi}{2} \phi_x+ ik_0(\lambda) \cos \frac{\beta \phi}{2} \phi_t) \sigma_1 \\+(-i k_1(\lambda) \sin \frac{\beta \phi}{2} \phi_t+ i k_0(\lambda) \sin \frac{\beta \phi}{2} \phi_x) \sigma_2 +( \frac{i\beta}{4} \phi_{tt} - \frac{i\beta}{4} \phi_{xx} ) \sigma_3  )dx \wedge dt
\end{multline}
and on the other hand,
\begin{equation}
\begin{split}
\cpb{H}{W(\lambda)} =& \ip{X_H}{\delta W(\lambda)}\\
=& \ip{X_H} (( -i\frac{\beta k_0(\lambda)}{2} \cos \frac{\beta \phi}{2} \delta \phi \wedge dx - i \frac{k_1(\lambda) \beta}{2} \cos \frac{\beta \phi}{2} \delta \phi \wedge dt) \sigma_1\\ &+(i \frac{k_1(\lambda) \beta}{2} \sin \frac{\beta \phi}{2} \delta \phi \wedge dx + i \frac{k_0(\lambda) \beta}{2} \sin \frac{\beta \phi}{2} \delta \phi\wedge dt )\sigma_2 - i \frac{\beta}{4} (\delta \phi_t \wedge dx + \delta \phi_x \wedge dt)\sigma_3 )\\
=& ( \frac{i \beta}{2} (k_0 (\lambda) \phi_t - k_1 (\lambda) \phi_x) \cos \frac{\beta \phi}{2}) \sigma_1 \\
&+ ( \frac{i\beta}{2}(k_0 (\lambda) \phi_x - k_1(\lambda) \phi_t) \sin\frac{\beta \phi}{2}) \sigma_2 - i \frac{m^2}{4} \sin \beta \phi \sigma_3.
\end{split}
\end{equation}
Therefore
\be
d W(\lambda)=\cpb{H}{W(\lambda)}\,\vol \Leftrightarrow \phi_{tt}-\phi_{xx} + \frac{m^2}{\beta} \sin \beta \phi=0\,,
\ee
which is the desired covariant Hamiltonian form of the sine-Gordon equation. One can verify with a direct computation that $\cpb{H}{W(\lambda)} = [U(\lambda), V(\lambda)]$.

\subsection{Nonrelativistic examples: the nonlinear Schr\"odinger and modified KdV equations}

\subsubsection{Nonlinear Schr\"odinger equation}

By a slight abuse of language, we call the following system of equations for two complex scalar fields $q,r$ the nonlinear Schr\"odinger (NLS) equation
\be
\label{NLS}
\begin{cases}
iq_t + q_{xx} - 2 q^2 r =0\\
i r_t - r_{xx} + 2 r^2 q =0\,. 
\end{cases}
\ee
Strictly speaking, NLS appears under the reduction $r=\pm q^*$. A Lagrangian form for \eqref{NLS} is given by
\begin{equation}
\Lambda = [ \frac{i}{2}(r q_t - r_t q) - r_x q_x -   r^2 q^2]\, dx \wedge dt,
\end{equation}
The system \eqref{NLS} is equivalent to the zero curvature equation which must hold as an identity in $\lda$
\be
\partial_t U(\lda)-\partial_x V(\lda)+[U(\lda),V(\lda)]=0\,.
\ee
where the Lax pair $(U,V)$ can be taken as
\bea
&&U(\lambda)= -\frac{i \lambda}{2} \sigma_3  + q \sigma_+ + r \sigma_-\,,\\
&&V(\lambda)= \left( \frac{\lambda^2}{2i} - i qr\right) \sigma_3 + ( \lambda q + i q_x) \sigma_+ + (\lambda r - i r_x) \sigma_-\,.
\eea
In the general notations of Section \ref{generalities}, here $N=2$, $m=1$, and the two field are $u_1=q$ and $u_2=r$. We will denote $u_k^{(i)}$, $k=1,2$, $(i)=(0,0)$, $(1,0)$, etc. as $q$, $r$, $q_x$, $r_x$, etc. for convenience.
\begin{proposition}
The form $\omegaone$ is given by
\be
\omegaone=\frac{i}{2}(q \delta r - r \delta q)\wedge dx - (q_x \delta r + r_x \delta q) \wedge dt\,,
\ee
and the multisymplectic form reads
\begin{equation}
\Omega = i \delta q \wedge \delta r \wedge dx +(\delta r \wedge \delta q_x + \delta q \wedge \delta r_x) \wedge dt\,.
\end{equation}
\end{proposition}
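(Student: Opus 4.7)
The plan is to follow exactly the strategy demonstrated for the sine-Gordon case: compute $\delta\Lambda$ term by term, use the graded derivation rule $d(\delta u_k^{(\mu)}) = -\sum_\nu \delta u_k^{(\mu)+e_\nu}\wedge dx^\nu$ to integrate by parts and strip all derivatives off the $\delta u_k^{(\mu)}$'s, then identify the total-derivative remainder as $-d\omegaone$, and finally apply $\delta$ once more to extract $\Omega$.

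First I would vary each piece of the Lagrangian density separately. The potential $-r^2q^2$ contributes only bulk terms proportional to $\delta q$ and $\delta r$. The spatial kinetic term $-r_xq_x$ produces $\delta q_x$ and $\delta r_x$ pieces, to be integrated by parts in $x$ via an identity of the form $-r_x\,\delta q_x\wedge\vol = r_{xx}\,\delta q\wedge\vol + d(r_x\,\delta q\wedge dt)$, and symmetrically for $q_x\,\delta r_x$. The first-order symplectic term $\tfrac{i}{2}(rq_t - r_tq)$ yields $\delta q_t$ and $\delta r_t$ contributions; the key identity is
\[
r\,\delta q_t\wedge\vol = -r_t\,\delta q\wedge\vol + d(r\,\delta q\wedge dx),
\]
and analogously for $-q\,\delta r_t$. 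The careful point is the sign produced when pushing $dt$ past $dx$ inside the 3-forms; I would keep track of this by systematically rewriting $dt\wedge\delta q\wedge dx = \delta q\wedge dx\wedge dt$ and $\delta q_t\wedge dt\wedge dx = -\delta q_t\wedge dx\wedge dt$.

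Assembling the bulk contributions gives the two Euler–Lagrange equations, which I would check reproduce the NLS system \eqref{NLS} exactly (coefficient of $\delta q$ yields $-ir_t - 2r^2q + r_{xx} = 0$, coefficient of $\delta r$ yields $iq_t + q_{xx} - 2q^2r = 0$); this is a useful consistency check on all signs. Collecting the surface contributions, I find
\[
\delta\Lambda - \text{bulk} = d\!\left[\tfrac{i}{2}(r\,\delta q - q\,\delta r)\wedge dx + (r_x\,\delta q + q_x\,\delta r)\wedge dt\right] = -d\omegaone,
\]
from which the stated expression for $\omegaone$ follows by flipping the overall sign.

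Finally, applying $\delta$ termwise and using $\delta(\delta u_k^{(\mu)}) = 0 = \delta(dx^\nu)$ gives
\[
\delta\omegaone = \tfrac{i}{2}(\delta q\wedge\delta r - \delta r\wedge\delta q)\wedge dx - (\delta r_x\wedge\delta q + \delta q_x\wedge\delta r)\wedge dt,
\]
which collapses to the claimed $\Omega$ after using anticommutativity $\delta r\wedge\delta q = -\delta q\wedge\delta r$ (producing the factor of $i$ on the $dx$-part) and reordering the $dt$-part. No step is conceptually hard; the only real obstacle is sign bookkeeping, since every reordering of the anticommuting one-forms $\delta u_k^{(\mu)}$, $dx$, $dt$ produces a minus sign that must be tracked consistently, especially in the IBP identities where $dt\wedge dx = -dx\wedge dt$ is responsible for the cancellation between the two time-kinetic contributions.
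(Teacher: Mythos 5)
Your proposal is correct and follows essentially the same route as the paper: compute $\delta\Lambda$, integrate by parts with the bicomplex rule $d(\delta u_k^{(\mu)})=-\sum_\nu \delta u_k^{(\mu)+e_\nu}\wedge dx^\nu$ to isolate the bulk Euler--Lagrange terms (which correctly reproduce the NLS system) and the exact remainder $-d\Omega^{(1)}$, then apply $\delta$ to obtain $\Omega$. Your intermediate identities and sign bookkeeping match the paper's computation exactly.
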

\begin{proof}
The $\delta$-variation of the Lagrangian is
\begin{equation}
\delta \Lambda = [ \frac{i}{2}(\delta r q_t + r \delta q_t -\delta r_t q - r_t \delta q) - \delta r_x q_x - r_x \delta q_x  - 2   r \delta r q^2 - 2   r^2 q \delta q    ]\wedge dx \wedge dt.
\end{equation}
Then, using
\begin{gather}
\frac{i}{2} r \delta q_t \wedge dx \wedge dt = d( \frac{i}{2} r \delta q \wedge dx) - \frac{i}{2} r_t \delta q \wedge dx \wedge dt,\\
\frac{i}{2} q \delta r_t \wedge dx \wedge dt = d( \frac{i}{2} q \delta r \wedge dx) - \frac{i}{2} q_t \delta r \wedge dx \wedge dt,\\
- q_x \delta r_x \wedge dx \wedge dt = d( q_x \delta r \wedge dt) + q_{xx}\delta r \wedge dt,\\
- r_x \delta q_x \wedge dx \wedge dt = d(r_x \delta q \wedge dt) + r_{xx} \delta q \wedge dt,
\end{gather}
we obtain 
\begin{multline}
\delta \Lambda = [(-i r_t + r_{xx} - 2   r^2 q) \delta q + (i q_t + q_{xx} - 2   r q^2)\delta r]\wedge dx \wedge dt\\
+ d(\frac{i}{2} r \delta q \wedge dx - \frac{i}{2} q \delta r \wedge dx + q_x \delta r \wedge dt + r_x \delta q \wedge dt)
\end{multline}
from which we can read off $\omegaone$.  We then compute $\Omega=\delta\omegaone$ to get the stated result.
\end{proof}
\begin{proposition}\label{decompositionNLS}
A Hamiltonian 1-form for the NLS equation is $F= F^1(q,r) dx + F^2(q,r, q_x, r_x) dt$, where
\begin{equation}
\parder{F^2}{r_x} =- i \parder{F^1}{r}, \qquad \parder{F^2}{q_x} =  i \parder{F^1}{q}.
\end{equation}
The respective Hamiltonian vector field is
\begin{equation}
X_F= \parder{F^2}{r_x}\partial_q + \parder{F^2}{q_x} \partial_r - \parder{F^2}{r} \partial_{q_x} - \parder{F^2}{q} \partial_{r_x}.
\end{equation}
Any two Hamiltonian 1-forms $F=Adx + Bdt$ and $G=Cdx + Ddt$ satisfy the equation
\begin{equation}
\cpb{F}{G} =  \{B,D\}_T dt-\{A,C\}_Sdx  
\end{equation} 
where the single-time Poisson Brackets are given by
\begin{equation}
 \{A, C \}_S =i\left( \parder{A}{q} \parder{C}{r}- \parder{C}{q} \parder{A}{r}\right)\,, \quad \{B, D \}_T = \parder{B}{q} \parder{D}{r_x} - \parder{D}{q} \parder{B}{r_x} + \parder{B}{r}\parder{D}{q_x} - \parder{D}{r} \parder{B}{q_x}\,.
\end{equation}
\end{proposition}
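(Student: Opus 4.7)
The plan is to follow the three-step procedure already carried out for the sine-Gordon case, now specialised to the NLS multisymplectic form $\Omega = i\,\delta q \wedge \delta r \wedge dx + (\delta r \wedge \delta q_x + \delta q \wedge \delta r_x)\wedge dt$.

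First I would fix the coordinate dependence of $F^1$ and $F^2$ by invoking Proposition \ref{dependence_coords}. Writing $\Omega = \Omega_x \wedge dt - \Omega_t \wedge dx$ with $\Omega_t = i\,\delta r \wedge \delta q$ and $\Omega_x = \delta r \wedge \delta q_x + \delta q \wedge \delta r_x$ identifies $J = \{q,r\}$ and $I = \{q,r,q_x,r_x\}$, which yields the claimed functional dependences of $F^1$ and $F^2$ with no further calculation.

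Next I would postulate a generic vertical vector field $X_F = \xi_q\,\partial_q + \xi_r\,\partial_r + \xi_{q_x}\,\partial_{q_x} + \xi_{r_x}\,\partial_{r_x}$; any components along $\partial_{q_t}$ or $\partial_{r_t}$ belong to $\ker \Omega$ and may be dropped. Computing $\ip{X_F}{\Omega}$ from the basic interior-product rules and comparing with $\delta F = \delta F^1 \wedge dx + \delta F^2 \wedge dt$, I match the coefficients of the six basis forms $\delta q \wedge dx$, $\delta r \wedge dx$, $\delta q\wedge dt$, $\delta r\wedge dt$, $\delta q_x\wedge dt$, $\delta r_x\wedge dt$. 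Four of these equations determine the $\xi$'s in terms of partial derivatives of $F^2$, and the remaining two force the compatibility conditions $\partial F^2/\partial q_x = i\,\partial F^1/\partial q$ and $\partial F^2/\partial r_x = -i\,\partial F^1/\partial r$; together this gives $X_F$ in exactly the stated form.

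Finally, for the bracket of $F = A\,dx + B\,dt$ and $G = C\,dx + D\,dt$, I compute $\cpb{F}{G} = -\ip{X_F}{\delta G}$ (with $p=1$ so $(-1)^{2-p}=-1$), substitute the expressions for $\xi_q,\xi_r,\xi_{q_x},\xi_{r_x}$ coming from $F^2 = B$, and split the result into its $dx$ and $dt$ components. The $dt$ coefficient reorganises directly into $-\{B,D\}_T$, so that $\cpb{F}{G}$ contributes $\{B,D\}_T\,dt$. In the $dx$ coefficient, the compatibility constraints let me trade $\partial B/\partial q_x$ and $\partial B/\partial r_x$ for $i\,\partial A/\partial q$ and $-i\,\partial A/\partial r$ respectively, producing the factor of $i$ characteristic of $\{A,C\}_S$ together with an overall sign that yields the $-\{A,C\}_S\,dx$ term. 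The main pitfall is sign bookkeeping: both the $(-1)$ from the definition of the covariant bracket and the asymmetric roles of $q$ and $r$ in $\{A,C\}_S$ must be tracked carefully, but the argument is otherwise routine and parallels the sine-Gordon derivation.
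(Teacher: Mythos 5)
Your proposal is correct and follows essentially the same route as the paper: a vertical vector-field ansatz $X_F=\xi_q\,\partial_q+\xi_r\,\partial_r+\xi_{q_x}\,\partial_{q_x}+\xi_{r_x}\,\partial_{r_x}$ substituted into $\ip{X_F}{\Omega}=\delta F$, coefficient matching on the six basis forms to extract both the components of $X_F$ and the compatibility conditions, and then $\cpb{F}{G}=-\ip{X_F}{\delta G}$ split into $dx$ and $dt$ parts to identify $\{~,~\}_S$ and $\{~,~\}_T$. The only (harmless) addition is your explicit appeal to Proposition~\ref{dependence_coords} for the coordinate dependence of $F^1,F^2$, which the paper leaves implicit in the statement.
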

\begin{proof}
We start from the Ansatz $X_F = A \partial_q + B \partial_r + C \partial_{q_x} + D \partial_{r_x}$, and we want to find the coefficients by setting
\begin{equation}
\ip{X_F}{ \Omega} = \delta F.
\end{equation}
The right hand-side reads
\begin{multline}
\delta F = \parder{F^2}{q} \delta q \wedge dt + \parder{F^2}{r} \delta r \wedge dt + \parder{F^1}{q_x} \delta q_x \wedge dt + \parder{F^2}{r_x} \delta r_x \wedge dt\\
+ \parder{F^1}{q} \delta q \wedge dx + \parder{F^1}{r} \delta r \wedge dx,
\end{multline}
while the left hand-side is
\begin{equation}
\ip{X_F}{ \Omega} = i A \delta r \wedge dx + A \delta r_x \wedge dt - i B \delta q \wedge dx + B \delta q_x \wedge dt - C \delta r \wedge dt - D \delta q \wedge dt.
\end{equation}
By matching the coefficients we get
\begin{equation}
- D = \parder{F^2}{q}, \quad - C= \parder{F^2}{r}, \quad B = \parder{F^2}{q_x}, \quad A = \parder{F^2}{r_x}, \quad i B = -\parder{F^1}{q}, \quad iA= \parder{F^1}{r},
\end{equation}
which is the first statement. 
The second statement then follows by a direct calculation from $\cpb{F}{G}=- i(X_F) \delta G$ and recognizing the single-time Poisson brackets as defined in the Proposition.
\end{proof}
\begin{theorem}\label{covariantrmatrix:NLS}
The Lax form $W(\lda)=U(\lda)\,dx+V(\lda)\,dt$ satisfies the following covariant Poisson bracket 
\be
\cpb{W_1(\lambda)}{W_2(\mu)}=[r_{12}(\lda,\mu),W_1(\lambda)+W_2(\mu)]
\ee
where the classical $r$-matrix is that of the NLS equation (see e.g. \cite{FT}), the so-called rational $r$-matrix,
\be
\label{r_rational}
r_{12}(\lambda,\mu) = \frac{1}{\mu-\lambda}(\sigma_+\otimes \sigma_- + \sigma_-\otimes\sigma_+ + \sigma_3 \otimes \sigma_3 /2 + \id \otimes \id /2)\,.
\ee
\end{theorem}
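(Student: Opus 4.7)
The plan is to parallel the proof of Theorem \ref{covariantrmatrix:SG}, the structure of which carries over almost verbatim once one works in the basis $\{\sigma_+,\sigma_-,\sigma_3\}$ of $\text{sl}(2,\CC)$ natural for the AKNS Lax pair. First I would decompose the Lax form as $W(\lambda)=W^+(\lambda)\sigma_+ + W^-(\lambda)\sigma_- + W^3(\lambda)\sigma_3$ with
\begin{align*}
W^+(\lambda) &= q\,dx + (\lambda q+iq_x)\,dt,\\
W^-(\lambda) &= r\,dx + (\lambda r-ir_x)\,dt,\\
W^3(\lambda) &= -\tfrac{i\lambda}{2}\,dx + \bigl(-\tfrac{i\lambda^2}{2}-iqr\bigr)\,dt,
\end{align*}
and verify that each $W^a(\lambda)$ satisfies the Hamiltonicity conditions of Proposition \ref{decompositionNLS}, namely $\partial_{q_x}B=i\partial_q A$ and $\partial_{r_x}B=-i\partial_r A$; these are immediate polynomial identities for the three components. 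With this in hand, Definition \ref{extension_lda} endows $\cpb{W_1(\lambda)}{W_2(\mu)}$ with a meaning as an element of $\text{End}(\CC^2\otimes\CC^2)$-valued one-form.

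The next step is to compute the nine scalar brackets $\cpb{W^a(\lambda)}{W^b(\mu)}$ via the splitting formula of Proposition \ref{decompositionNLS},
\begin{equation*}
\cpb{W^a(\lambda)}{W^b(\mu)} = \{V^a(\lambda),V^b(\mu)\}_T\,dt - \{U^a(\lambda),U^b(\mu)\}_S\,dx,
\end{equation*}
so that only the single-time brackets need to be tabulated. The space brackets collapse immediately: since $U^3$ is independent of the fields, every bracket involving $U^3$ vanishes, and the only non-trivial pairing is $\{U^+(\lambda),U^-(\mu)\}_S=\{q,r\}_S=i$ with its antisymmetric counterpart. The time brackets demand a short computation using the explicit derivatives of $V^\pm,V^3$ with respect to $q,r,q_x,r_x$; the surviving ones are $\{V^+(\lambda),V^-(\mu)\}_T$, $\{V^+(\lambda),V^3(\mu)\}_T$, $\{V^-(\lambda),V^3(\mu)\}_T$, and their skew counterparts, each producing an expression polynomial in $\lambda,\mu,q,r,q_x,r_x$. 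I would then re-assemble these into the matrix object $\cpb{W_1(\lambda)}{W_2(\mu)}$.

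On the other side of the identity, I will expand $[r_{12}(\lambda,\mu),W_1(\lambda)+W_2(\mu)]$ algebraically, using $[\sigma_+,\sigma_-]=\sigma_3$, $[\sigma_3,\sigma_\pm]=\pm 2\sigma_\pm$, and $[A\otimes\id,B\otimes C]=[A,B]\otimes C$ (and its mirror). Splitting this result into its $dx$ and $dt$ components gives separately the two identities of Corollary \ref{UU_VV} for NLS, which it suffices to verify. The $dx$-component is short: one checks that only the terms in $r_{12}$ of type $\sigma_\pm\otimes\sigma_\mp$ contribute to $[r_{12},U_1+U_2]$, and that the $(\mu-\lambda)^{-1}$ pole is cancelled by a factor $\lambda-\mu$ produced from $U^3(\lambda)-U^3(\mu)=-i(\lambda-\mu)/2$, yielding exactly $\{U_1,U_2\}_S=-[r_{12},U_1+U_2]$.

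I expect the main obstacle to lie in the $dt$ component, where the rational pole $(\mu-\lambda)^{-1}$ multiplies expressions that are quadratic in the spectral parameters. The quadratic term $-\tfrac{i\lambda^2}{2}$ in $V^3(\lambda)$ will generate a contribution proportional to $(\lambda^2-\mu^2)/(\mu-\lambda)=-(\lambda+\mu)$, while the linear-in-$\lambda$ pieces of $V^\pm$ produce the constant $(\lambda-\mu)/(\mu-\lambda)=-1$, and one has to check that these ``telescoping'' simplifications reproduce precisely the polynomial expressions found in the time-bracket table, including the cross-field term $-iqr$ in $V^3$. Once these term-by-term coefficient matches are carried out for each tensor type $\sigma_a\otimes\sigma_b$, the identity \eqref{cov_r_PB} follows, completing the proof.
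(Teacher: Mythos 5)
Your plan coincides with the paper's own proof in every structural respect: the same decomposition of $W(\lambda)$ in the $\{\sigma_+,\sigma_-,\sigma_3\}$ basis with the same components, the verification that each $W^a(\lambda)$ meets the Hamiltonicity criterion of Proposition \ref{decompositionNLS}, the reduction of $\cpb{W_1(\lambda)}{W_2(\mu)}$ to the single-time brackets via the splitting formula (your table of surviving brackets matches the paper's six nonzero ones), and the term-by-term comparison with the explicitly expanded commutator. Organising the final comparison by $dx$ and $dt$ components rather than as one assembled one-form is a cosmetic difference only.

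There is, however, one concretely false sub-claim in your treatment of the $dx$-component: it is not true that ``only the terms in $r_{12}$ of type $\sigma_\pm\otimes\sigma_\mp$ contribute to $[r_{12},U_1+U_2]$.'' The term $\tfrac{1}{2(\mu-\lambda)}\,\sigma_3\otimes\sigma_3$ contributes
\begin{equation*}
\frac{1}{\mu-\lambda}\Bigl((q\sigma_+-r\sigma_-)\otimes\sigma_3+\sigma_3\otimes(q\sigma_+-r\sigma_-)\Bigr)\,,
\end{equation*}
and this is exactly what cancels the field-dependent pole terms such as $\tfrac{r}{\mu-\lambda}\,\sigma_3\otimes\sigma_-$ and $-\tfrac{q}{\mu-\lambda}\,\sigma_+\otimes\sigma_3$ generated by $[\sigma_+\otimes\sigma_-+\sigma_-\otimes\sigma_+,\,U_1+U_2]$. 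If you drop the $\sigma_3\otimes\sigma_3$ piece as your remark suggests, the $dx$-component of $[r_{12},U_1+U_2]$ retains uncancelled $(\mu-\lambda)^{-1}$ poles in the $\sigma_\pm\otimes\sigma_3$ and $\sigma_3\otimes\sigma_\pm$ slots, and the identity $\{U_1,U_2\}_S=-[r_{12},U_1+U_2]$ (whose left side contains no such terms) would appear to fail. Keep all terms of $r_{12}$ in the expansion --- as the paper does, which is why its intermediate formula carries coefficients like $\tfrac{W^-(\lambda)-W^-(\mu)}{\mu-\lambda}$ whose $dx$-parts vanish only because the differences are taken --- and the rest of your computation goes through exactly as you describe.
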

\begin{proof} Again, we give here the proof by direct computation. 
We write $W_1(\lambda) = W^3(\lambda) \sigma_3 \otimes \id + W^+(\lambda) \sigma_+ \otimes \id + W^-(\lambda) \sigma_- \otimes \id$ and $W_2(\mu) = W^3(\mu) \id \otimes \sigma_3 + W^+(\mu) \id \otimes \sigma_+ + W^-(\mu) \id \otimes \sigma_-$. 
For the right-hand side, we find
\bea
\label{form_PB_NLS}
[r_{12}(\lambda-\mu), W_1(\lambda) + W_2(\mu)]&=&\frac{1}{\mu -\lambda} \left[(2W^3(\mu)-2W^3(\lambda))\sigma_+\otimes \sigma_- +(W^-(\lambda) - W^-(\mu))\sigma_3 \otimes \sigma_- \right. \nonumber\\
&&\left.+(W^+(\lambda) - W^+(\mu))\sigma_+\otimes \sigma_3+(2W^3(\lambda) - 2W^3(\mu))\sigma_- \otimes \sigma_+ \right. \nonumber\\
&&\left.+ (W^+(\mu)-W^+(\lambda))\sigma_3 \otimes \sigma_+ + (W^-(\mu)-W^-(\lambda))\sigma_-\otimes \sigma_3\right].\nonumber\\
&=& (-i dx -i(\lambda +\mu)dt)\,(\sigma_+\otimes \sigma_- - \sigma_- \otimes \sigma_+)+r\,dt\,( \sigma_- \otimes \sigma_3 -\sigma_3 \otimes \sigma_-)  \nonumber\\ 
&&+ q\, dt\,\sigma_+ \otimes \sigma_3 +q\, dt\,(\sigma_+\otimes \sigma_3-\sigma_3 \otimes \sigma_+ )
\eea
For the left-hand side, note that $W^3$, $W^+$ and $W^-$ are Hamiltonian forms. Thus, a direct calculation using the splitting formula shows that the only nonzero covariant Poisson bracket relations are the following
\bea
&&\cpb{W^+(\lambda)}{W^-(\mu)} =-i dx - i (\lambda + \mu) dt\,,\nonumber\\
&&\cpb{W^+(\lambda)}{W^3(\mu)} = -q dt\,,\nonumber\\
&&\cpb{W^-(\lambda)}{W^+(\mu)}= i dx + i (\lambda + \mu) dt\,,\nonumber\\
&&\cpb{W^-(\lambda)}{W^3(\mu)}= r dt\,,\nonumber\\
&&\cpb{W^3(\lambda)}{W^+(\mu)}=  q dt\,,\nonumber\\
&&\cpb{W^3(\lambda)}{W^-(\mu)} = -r dt\,.\nonumber
\eea
It remains to insert in the definition \eqref{def_matrix_cov_PB} to recognize that $\cpb{W_1(\lambda)}{W_2(\mu)}$ is precisely \eqref{form_PB_NLS}.
\end{proof}
We conlude the NLS example by a description of its covariant Hamiltonian formulation. We first compute the energy-momentum tensors $T_x= T_{xx} dt - T_{xt} dx$ and $T_t= T_{tx} dt - T_{tt} dx$ according to formula \eqref{EM_tensor} and find
$T_{xx} = \frac{i}{2} (q r_t - r q_t) - q_x r_x + q^2 r^2$ and $ T_{tt}= q_x r_x + q^2 r^2$. Hence, the covariant Hamiltonian $\mathcal{H} = H \vol$ is given by
\begin{equation}
H= q^2 r^2 - q_x r_x.
\end{equation}
The covariant Hamiltonian vector field $X_H$, such that $\ip{X_H}{ \Omega} = \delta H$ can be taken as
\begin{equation}
X_H = -(  i q^2r \partial_q - i q r^2 \partial_r ) \wedge \partial_x - (q_x \partial_q + r_x \partial_r + q^2 r \partial_{q_x} + qr^2 \partial_{r_x})\wedge \partial_t\,.
\end{equation}
Equipped with this, we have the following result.
\begin{proposition}
The covariant Hamiltonian formulation of the NLS equation is given by
\be
d W(\lambda)=\cpb{H}{W(\lambda)}\,\vol\,,
\ee
where $ W(\lambda)$ is the Lax Form.
\end{proposition}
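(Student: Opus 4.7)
The strategy is a direct computation that mirrors in every step the one performed in full for the sine-Gordon model. The identity to be proved has three ingredients: the Lax one-form $W(\lambda)=U(\lambda)\,dx+V(\lambda)\,dt$, the covariant Hamiltonian $H=q^2r^2-q_xr_x$, and the Hamiltonian bivector field $X_H$ displayed just above the statement. I would proceed in three steps: (i) confirm that $W(\lambda)$ is a Hamiltonian form with $\text{sl}(2,\mathcal{A})$-valued coefficients in the sense of Definition \ref{extension_lda}; (ii) compute both sides of the desired equation explicitly; (iii) match the two expressions and recognise the NLS system.

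Step (i) amounts to verifying, componentwise in the Pauli basis $\{\sigma_3,\sigma_+,\sigma_-\}$, the two Hamiltonicity constraints of Proposition \ref{decompositionNLS}, namely $\partial_{r_x}V^i=-i\,\partial_r U^i$ and $\partial_{q_x}V^i=i\,\partial_q U^i$. All six conditions follow immediately from the explicit expressions of $U$ and $V$ read off from the definition of the NLS Lax pair.

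For step (ii), the left-hand side is straightforward: $dW(\lambda)=\bigl(\partial_x V(\lambda)-\partial_t U(\lambda)\bigr)\,\vol$, with the total derivatives producing the expected polynomial expressions in $q,r,q_x,r_x,q_{xx},r_{xx},q_t,r_t$. For the right-hand side I would use $\cpb{H}{W(\lambda)}=\ip{X_H}{\delta W(\lambda)}$ from Definition \ref{def_cov_PB} with $p=0$, compute $\delta W(\lambda)$ in the basis $\{\delta q,\delta r,\delta q_x,\delta r_x\}$ wedged with $\{dx,dt\}$, and then contract with the given bivector $X_H$. The cleanest route, mirroring the final verification in the sine-Gordon section, is to show that $\cpb{H}{W(\lambda)}=[U(\lambda),V(\lambda)]$; once this identity is in hand, the proposition reduces to the zero-curvature equation $\partial_t U(\lambda)-\partial_x V(\lambda)+[U(\lambda),V(\lambda)]=0$, which is well known to be equivalent to \eqref{NLS}.

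The main obstacle I expect is the graded bookkeeping in the interior product $\ip{X_H}{\delta W(\lambda)}$: $X_H$ is a sum of decomposable bivectors of mixed vertical--horizontal type (one vertical factor such as $\partial_q$ and one horizontal factor $\partial_x$ or $\partial_t$), whereas $\delta W(\lambda)$ is a $(1,1)$-form. Handling the action of such a bivector on each monomial $\delta u_k^{(\mu)}\wedge dx^\sigma$ requires strict adherence to the sign conventions fixed in Section \ref{generalities}; once these are respected, as illustrated in the sine-Gordon computation, the remaining verification is pure Pauli-matrix algebra and collects into the commutator $[U(\lambda),V(\lambda)]$ as desired.
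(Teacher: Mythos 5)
Your proposal is correct and follows essentially the same route as the paper: a direct computation of $dW(\lambda)=(\partial_x V(\lambda)-\partial_t U(\lambda))\,\vol$ on one side and of $\cpb{H}{W(\lambda)}=\ip{X_H}{\delta W(\lambda)}$ on the other, with the match reproducing the NLS system. The only cosmetic difference is that the paper equates the two computed expressions directly and relegates the identity $\cpb{H}{W(\lambda)}=[U(\lambda),V(\lambda)]$ to a closing remark, whereas you propose to establish that identity first and then invoke the zero-curvature equation; both orderings rest on the same contraction $\ip{X_H}{\delta W(\lambda)}$.
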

\begin{proof} On the one hand
\begin{equation}
d W(\lambda) = ((- i q r_x - i r q_x) \sigma_3 + (- q_t + \lambda q_x + i q_{xx}) \sigma_+ + (- r_t + \lambda r_x - i r_{xx})\sigma_-)dx \wedge dt\,,
\end{equation}
while on the other hand,
\begin{equation}
\begin{split}
\cpb{H}{W(\lambda)} =& \ip{X_H }{\delta W(\lambda)} \\
=&\ip{X_H }{ (\sigma_+ \delta q \wedge dx + \sigma_- \delta r \wedge dx + (-i r \sigma_3 + \lambda \sigma_+) \delta q \wedge dt + (-i q \sigma_3 + \lambda \sigma_-) \delta r \wedge dt  \\ &+ i \sigma _+ \delta q_x \wedge dt - i \sigma_- \delta r_x \wedge dt)} \\
=&( 2i q^2 r + \lambda q_x) \sigma_+ + (-2i q r^2 + \lambda r_x) \sigma_- - (i q_x r + i qr_x) \sigma_3 \,.
\end{split}
\end{equation}
Therefore $d W(\lambda) = \cpb{H}{W(\lambda)} \vol $ reproduces the NLS equation.
\end{proof}
One can verify with direct computation that $\cpb{H}{W(\lambda)} = [U(\lambda), V(\lambda)]$.

\subsubsection{The modified KdV equation}

By a slight abuse of language, we call the following system of equations for two complex scalar fields $q,r$ the modified Korteweg-de Vries (KdV) equation,
\bea
\label{mKdV}
\begin{cases}
q_t - q_{xxx} + 6 qrq_x  =0\,,\\
r_t - r_{xxx} + 6 qrr_x =0 \,.
\end{cases}
\eea
It is the next commuting flow in the so-called AKNS hierarchy \cite{AKNS} that also contains the NLS system \eqref{NLS}. The original (real) modified KdV equation is obtained as the real reduction $r=q$ with $q$ a real-valued field. 
A Lagrangian form for \eqref{mKdV} is given by
\begin{equation}
\Lambda = [ \frac{i}{2}(r q_t - r_t q) +\frac{i}{2}(q_{xx} r_x - r_{xx} q_x ) - \frac{3i}{2}qr(qr_x - r q_x) ] dx \wedge dt\,.
\end{equation}
The system \eqref{mKdV} is equivalent to the zero curvature equation which must hold as an identity in $\lda$
\be
\partial_t U(\lda)-\partial_x V(\lda)+[U(\lda),V(\lda)]=0\,.
\ee
where the Lax pair $(U,V)$ can be taken as
\bea
U(\lambda)&=& -\frac{i \lambda}{2} \sigma_3 + q \sigma_+ + r \sigma_- \,,\\
V(\lambda)&=&(- \frac{\lambda^3}{2i} + i \lambda qr + r_x q - q_x r) \sigma_3 \nonumber\\
&&+ (-\lambda^2 q - i \lambda q_x + q_{xx} - 2 q^2 r) \sigma_+ + (- \lambda^2 r + i \lambda r_x + r_{xx} - 2 q r^2) \sigma_- \,.
\eea
In the general notations of Section \ref{generalities}, here $N=2$, $m=2$, and the two field are $u_1=q$ and $u_2=r$. We will denote $u_k^{(i)}$, $k=1,2$, $(i)=(0,0)$, $(1,0)$, etc. as $q$, $r$, $q_x$, $r_x$, etc. for convenience. One reason for looking at this model in addition to NLS, besides its physical relevance as a prototypical model related to the famous Korteweg-de Vries equation (by a Miura transformation \cite{Miura}), is that it is degenerate both in the standard Legendre transform and the dual one \cite{ACDK}. However, the method laid out by Dickey produces a multisymplectic form that is not sensitive to the degeneracy and both single-time forms are indeed symplectic (nondegenerate). In fact, they coincide with the ones obtained by the Dirac procedure in \cite{ACDK}. This feature is quite remarkable but its origin is not understood yet. As mentioned earlier, it might provide in the present multisymplectic context the analog of the argument popularised by Faddeev and Jackiw in \cite{FJ}.
\begin{proposition}
	The form $\omegaone$ is given by
	\be
	\omegaone=\frac{i}{2}\left[(q \delta r - r \delta q) \wedge dx + (2 q_{xx} \delta r - 2 r_{xx} \delta q + r_x \delta q_x - q_ x \delta r_x - 3 q^2 r \delta r + 3 qr^2 \delta q) \wedge dt\right]\,,
	\ee
	and the multisymplectic form reads
	\begin{equation}
	\Omega = i \delta q \wedge \delta r \wedge dx + ( i \delta q_{xx} \wedge \delta r - i \delta r_{xx} \wedge \delta q - 6 i qr \delta q \wedge \delta r + i \delta r_x \wedge \delta q_x) \wedge dt\,.
	\end{equation}
\end{proposition}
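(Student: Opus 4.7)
The plan is to mimic the proof for the NLS case, with the new feature that the mKdV Lagrangian depends on second spatial derivatives $q_{xx}, r_{xx}$, so integration by parts has to be iterated. First I would compute $\delta \Lambda$ term by term, obtaining a sum of contributions each proportional to a $\delta u_k^{(\mu)}\wedge \vol$, with $\delta u_k^{(\mu)}$ ranging over $\delta q, \delta r, \delta q_x, \delta r_x, \delta q_{xx}, \delta r_{xx}, \delta q_t, \delta r_t$. The strategy is then to push every derivative off the $\delta u_k^{(\mu)}$ using the variational identity $d(\delta u_k^{(\mu)}) = -\sum_\nu \delta u_k^{(\mu)+e_\nu}\wedge dx^\nu$, collecting at each step a boundary piece $d(\text{something})$ and a new bulk term, exactly as in the NLS example.

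For the first-order time-derivative piece $\frac{i}{2}(r\delta q_t - q\delta r_t)\wedge \vol$ the manipulation is identical to NLS and produces the $\frac{i}{2}(q\delta r - r\delta q)\wedge dx$ contribution to $\omegaone$. The second-order terms $\frac{i}{2}q_{xx}\delta r_x$ and $-\frac{i}{2}r_{xx}\delta q_x$ each need one integration by parts to reduce $\delta r_x \to \delta r$ and $\delta q_x \to \delta q$, picking up boundary pieces proportional to $q_{xx}\delta r$ and $r_{xx}\delta q$ in $\omegaone$. The complementary terms $\frac{i}{2}r_x\delta q_{xx}$ and $-\frac{i}{2}q_x\delta r_{xx}$ are treated by a single integration by parts moving one derivative off $\delta q_{xx}, \delta r_{xx}$, which leaves the residual $r_x\delta q_x$ and $q_x\delta r_x$ pieces visible in $\omegaone$; a second integration by parts converts the remaining bulk term into $r_{xxx}\delta q \wedge \vol$ and $-q_{xxx}\delta r \wedge \vol$ Euler--Lagrange contributions. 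The choice of when to stop integrating by parts is what fixes the stated primitive of $d\omegaone$, reflecting the ambiguity $\omegaone \to \omegaone + d\omega^{(1,0)}$ discussed earlier in the paper.

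The cubic interaction $-\frac{3i}{2}qr(qr_x - rq_x)$ requires the product rule: after expansion, the pieces proportional to $\delta r_x$ and $\delta q_x$ have field-dependent coefficients $-\frac{3i}{2}q^2 r$ and $\frac{3i}{2}qr^2$. One integration by parts on each then produces simultaneously a bulk Euler--Lagrange contribution proportional to the spatial derivative of those coefficients (supplying the nonlinear $6qrq_x, 6qrr_x$ terms expected in mKdV) and the surface pieces $-3q^2 r\,\delta r \wedge dt$ and $3qr^2\,\delta q \wedge dt$ visible in $\omegaone$. Collecting all bulk contributions one should recover the system \eqref{mKdV}; this serves as the cross-check that the integration by parts has been carried out consistently, and the remaining exact piece is then read off as $-d\omegaone$, yielding the claimed formula.

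The multisymplectic form $\Omega = \delta \omegaone$ is a mechanical calculation: apply $\delta$ to each scalar coefficient via $\delta f = \sum (\partial f/\partial u_k^{(\mu)})\,\delta u_k^{(\mu)}$ and use the graded Leibniz rule together with $\delta^2 = 0$. The only mildly nontrivial step is the cubic part: $\delta(-\frac{3i}{2}q^2 r\,\delta r \wedge dt)$ produces $-\frac{3i}{2}(2qr\,\delta q + q^2\,\delta r)\wedge \delta r \wedge dt$, whose $\delta r \wedge \delta r$ part vanishes, and the $\delta q \wedge \delta r$ part combines with the analogous contribution from $\frac{3i}{2}qr^2\,\delta q \wedge dt$ to yield the $-6iqr\,\delta q \wedge \delta r \wedge dt$ summand in $\Omega$. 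The main obstacle throughout is the bookkeeping of signs and the disciplined choice of integration-by-parts scheme; there is no conceptual difficulty beyond what was already met in the NLS example.
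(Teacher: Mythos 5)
Your proposal is correct and follows exactly the route the paper intends: the paper's own proof of this proposition is simply ``by direct calculation as in the two previous examples,'' i.e.\ the same iterated integration by parts via $d(\delta u_k^{(\mu)})=-\sum_\nu \delta u_k^{(\mu)+e_\nu}\wedge dx^\nu$ to isolate the Euler--Lagrange bulk terms and read off $-\Omega^{(1)}$ as the exact remainder, followed by applying $\delta$. One small point worth making explicit in a write-up: the coefficients $2q_{xx}\,\delta r$ and $-2r_{xx}\,\delta q$ in $\Omega^{(1)}$ arise because the second integration by parts on the terms carrying $\delta q_{xx}$, $\delta r_{xx}$ also produces boundary pieces, which double the contributions you already collected from the $q_{xx}\,\delta r_x$ and $r_{xx}\,\delta q_x$ terms.
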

\begin{proof}
By direct calculation as in the two previous examples. 
\end{proof}

Note that for higher order field theories such as modified KdV, the form $\omegaone$ is not uniquely defined in general. However there is a property called quasisymmetry, introduced in \cite{Kolar}, that implies uniqueness of this form. In the present case, it boils down to checking that the coefficient of $\delta q_x\wedge dx$ in $\omegaone$, denoted $B_1^{12}$, equals the coefficient of $\delta q_t\wedge dt$, denoted $B_1^{21}$, and similarly for the coefficients of $\delta r_x\wedge dx$ and of $\delta r_t\wedge dt$, denoted $B_2^{12}$ and $B_2^{21}$ respectively. These coefficients are zero in our case so the property is satisfied.

\begin{proposition}\label{decompositionmKdV}
	A Hamiltonian 1-form for the mKdV equation is given by $F=  F^1(q,r) dx + F^2(q,r,q_x,r_x,q_{xx}, r_{xx}) dt$, where
	\begin{equation}
	\parder{F^2}{r_{xx}} =  \parder{F^1}{r} , \qquad \parder{F^2}{q_{xx}} =   \parder{F^1}{q} .
	\end{equation}
	The corresponding Hamiltonian vector field is
\bea
	X_F&=&- i \parder{F^1}{r} \partial_q +i \parder{F^1}{q}\partial_r + i \parder{F^2}{r_x} \partial_{q_x} - i \parder{F^2}{q_x} \partial_{r_x} \\
	&&- i(\parder{F^2}{r} + 6qr\parder{F^2}{r_{xx}}) \partial_{q_{xx}} + i (\parder{F^2}{q} + 6qr \parder{F^2}{q_{xx}})\partial_{r_{xx}}
\eea
		Any two Hamiltonian 1-forms $F=Adx + Bdt$ and $G=Cdx + Ddt$ satisfy the equation
	\begin{equation}
	\cpb{F}{G} =  \{B,D\}_T\, dt -\{A,C\}_S\,dx 
	\end{equation} 
	where the single-time Poisson Brackets are given by 
	\begin{equation}
	\{A, C \}_S =i\left( \parder{A}{q} \parder{C}{r}- \parder{C}{q} \parder{A}{r}\right),
	\end{equation}
\bea
	\{B, D \}_T &=&i\left( - \parder{B}{q} \parder{D}{r_{xx}} +  \parder{D}{q} \parder{B}{r_{xx}}  +  \parder{B}{r} \parder{D}{q_{xx}}- \parder{D}{r} \parder{B}{q_{xx}}\right) \nonumber\\
	&&+ i\left(\parder{B}{q_x} \parder{D}{r_x} - \parder{D}{q_x}\parder{B}{r_x} - 6qr\parder{B}{q_{xx}} \parder{D}{r_{xx}} + 6qr \parder{D}{q_{xx}} \parder{B}{r_{xx}} \right).
\eea
\end{proposition}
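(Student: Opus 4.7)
The plan is to follow exactly the template of the proof of Proposition \ref{decompositionNLS}, adapting it to the richer multisymplectic form $\Omega$ of mKdV, which now contains second-derivative variations $\delta q_{xx},\delta r_{xx}$ together with a coupling term $-6iqr\,\delta q\wedge\delta r\wedge dt$. The strategy has three steps: (i) pose an ansatz for $X_F$ as a general vertical vector field with one scalar coefficient in each of the six directions $\partial_q,\partial_r,\partial_{q_x},\partial_{r_x},\partial_{q_{xx}},\partial_{r_{xx}}$, (ii) compute $\ip{X_F}{\Omega}$ and match with $\delta F$ term by term to fix those coefficients and simultaneously read off the constraints on $F$, and (iii) substitute the resulting $X_F$ into $\cpb{F}{G}=-\ip{X_F}{\delta G}$ and split the output into $dx$ and $dt$ parts to identify the two single-time brackets.

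For step (ii), I would contract $X_F$ with each of the five summands of $\Omega$ separately. Because $X_F$ is purely vertical, $\ip{X_F}{dx}=\ip{X_F}{dt}=0$, so each summand $\omega\wedge dx$ or $\omega\wedge dt$ contributes $(\ip{X_F}{\omega})\wedge dx$ or $(\ip{X_F}{\omega})\wedge dt$. Assembling everything, $\ip{X_F}{\Omega}$ is a linear combination of $\delta q\wedge dx,\,\delta r\wedge dx$ and of $\delta q\wedge dt,\,\delta r\wedge dt,\,\delta q_x\wedge dt,\,\delta r_x\wedge dt,\,\delta q_{xx}\wedge dt,\,\delta r_{xx}\wedge dt$, with coefficients built from the six unknowns and the function $-6iqr$. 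Expanding $\delta F$ in the same basis and matching produces eight scalar equations.

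The first two equations (coefficients of $\delta q\wedge dx$ and $\delta r\wedge dx$) fix the $\partial_q$- and $\partial_r$-components of $X_F$ in terms of $\parder{F^1}{r}$ and $\parder{F^1}{q}$ respectively, as in NLS. The equations coming from $\delta q_{xx}\wedge dt$ and $\delta r_{xx}\wedge dt$ involve the same two components of $X_F$, and therefore force the compatibility conditions $\parder{F^2}{q_{xx}}=\parder{F^1}{q}$ and $\parder{F^2}{r_{xx}}=\parder{F^1}{r}$ stated in the proposition. The equations from $\delta q_x\wedge dt$ and $\delta r_x\wedge dt$ cleanly determine the $\partial_{r_x}$- and $\partial_{q_x}$-components via $\parder{F^2}{q_x}$ and $\parder{F^2}{r_x}$. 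The last two, from $\delta q\wedge dt$ and $\delta r\wedge dt$, are the only subtle ones: each receives two contributions, one from $-i\,\delta r_{xx}\wedge\delta q\wedge dt$ (carrying the $\partial_{r_{xx}}$- or $\partial_{q_{xx}}$-component of $X_F$) and one from $-6iqr\,\delta q\wedge\delta r\wedge dt$ (carrying the $\partial_r$- or $\partial_q$-component already fixed in terms of $F^1$). Solving and then using the compatibility conditions to re-express $\parder{F^1}{q}$ and $\parder{F^1}{r}$ as $\parder{F^2}{q_{xx}}$ and $\parder{F^2}{r_{xx}}$ produces exactly the formula for $X_F$ quoted in the proposition.

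For step (iii), with $X_F$ determined, the bracket $\cpb{F}{G}=-\ip{X_F}{\delta G}$ is bookkeeping: only the vertical pieces of $\delta G$ survive the contraction, and grouping the result as $-\{A,C\}_S\,dx+\{B,D\}_T\,dt$ identifies the single-time brackets. The $dx$-part collects only the $\partial_q$- and $\partial_r$-contributions paired with $\parder{C}{q},\parder{C}{r}$, reproducing the rational bracket $\{A,C\}_S$ exactly as for NLS. The $dt$-part is where the only genuine computational nuisance lies: the $\partial_{q_{xx}},\partial_{r_{xx}}$-components of $X_F$ (carrying the $6qr$ coupling) must be paired with $\parder{D}{q},\parder{D}{r}$, while the $\partial_q,\partial_r$-components (which, via the compatibility conditions, can be rewritten as derivatives of $F^2$ with respect to $q_{xx},r_{xx}$) are paired with $\parder{D}{q_{xx}},\parder{D}{r_{xx}}$. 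Checking that the cross terms assemble into the symmetric form of $\{B,D\}_T$ stated in the proposition is the main book-keeping step; I do not expect a conceptual obstacle, since everything runs parallel to NLS but with one new coupling term and two extra derivative directions to track.
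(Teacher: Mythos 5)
Your proposal follows exactly the paper's own proof: the same six-component vertical ansatz for $X_F$, the same term-by-term matching of $\ip{X_F}{\Omega}$ against $\delta F$ (with the $\delta q_{xx}\wedge dt$ and $\delta r_{xx}\wedge dt$ coefficients forcing the compatibility conditions and the $\delta q\wedge dt$, $\delta r\wedge dt$ coefficients mixing the $6qr$ coupling into the $\partial_{q_{xx}},\partial_{r_{xx}}$ components), and the same final substitution into $\cpb{F}{G}=-\ip{X_F}{\delta G}$ to read off the two single-time brackets. The sign and coefficient bookkeeping you describe reproduces the stated $X_F$ and brackets correctly.
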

\begin{proof}
Inserting $X_F =A\partial_q + B \partial_r + C \partial_{q_x} + D \partial_{r_x} + E \partial_{q_{xx}} + G \partial_{r_{xx}}$ into
\begin{equation}
\ip{X_F}{ \Omega }= \delta F.
\end{equation}
We have to match the coefficients of 
\bea
\delta F &=& \parder{F^1}{q} \delta q \wedge dx + \parder{F^1}{r} \delta r \wedge dx 
+ \parder{F^2}{q} \delta q \wedge dt + \parder{F^2}{r} \delta r \wedge dt + \parder{F^2}{q_x} \delta q_x \wedge dt\nonumber\\ 
&&+ \parder{F^2}{r_x} \delta r_x \wedge dt + \parder{F^2}{q_{xx}} \delta q_{xx} \wedge dt + \parder{F^2}{r_{xx}} \delta r_{xx} \wedge dt
\eea
with those of
\bea
\ip{X_F}{\Omega} &=& i A \delta r \wedge dx + i A \delta r_{xx} \wedge dt - 6 i qrA \delta r \wedge dt - i B \delta q \wedge dx - i B \delta q_{xx} \wedge dt \nonumber\\
&& + 6iqrB \delta q \wedge dt - i C \delta r_x \wedge dt + i D \delta q_x \wedge dt + i E \delta r \wedge dt - i G \delta q \wedge dt.
\eea
This gives the first statement. The second statement then follows by a direct calculation from $\cpb{F}{G}=- \ip{X_F}{ \delta G}$ and recognizing the single-time Poisson brackets as defined in the Proposition.
\end{proof}

\begin{theorem}\label{covariantrmatrix:mKdV} 
	The Lax form $W(\lda)=U(\lda)\,dx+V(\lda)\,dt$ satisfies the following covariant Poisson bracket 
	\be
	\cpb{W_1(\lambda)}{W_2(\mu)}=[r_{12}(\lda,\mu),W_1(\lambda)+W_2(\mu)]
	\ee
	where $r$ is the rational classical $r$-matrix of the NLS equation. 
	\be
	r_{12}(\lambda-\mu) = \frac{1}{\mu-\lambda}(\sigma_+\otimes \sigma_- + \sigma_-\otimes\sigma_+ + \sigma_3 \otimes \sigma_3 /2 + \id \otimes \id /2)\,.
	\ee
\end{theorem}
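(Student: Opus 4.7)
The strategy mirrors the proof of Theorem \ref{covariantrmatrix:NLS}: decompose $W$ in the $\sigma_\pm, \sigma_3$ basis, exploit the splitting formula of Proposition \ref{decompositionmKdV}, and match with a direct computation of the commutator on the right-hand side. The $r$-matrix is the same rational one as for NLS, so the algebraic structure on the right-hand side is already understood; the real work lies in computing the covariant Poisson brackets of the mKdV Lax components.

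First, I would write $W(\lambda) = W^3(\lambda)\sigma_3 + W^+(\lambda)\sigma_+ + W^-(\lambda)\sigma_-$ with $W^i(\lambda) = U^i(\lambda)\,dx + V^i(\lambda)\,dt$, and verify that each $W^i(\lambda)$ is a Hamiltonian one-form by checking the compatibility conditions $\partial V^i/\partial q_{xx} = \partial U^i/\partial q$ and $\partial V^i/\partial r_{xx} = \partial U^i/\partial r$ coming from Proposition \ref{decompositionmKdV}. For $W^\pm(\lambda)$ this reduces to checking that the coefficients of $q_{xx}, r_{xx}$ in $V^\pm$ match the coefficients of $q, r$ in $U^\pm$, which holds (both equal $1$ or $0$ as appropriate). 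For $W^3(\lambda)$ the conditions are trivially zero on both sides. Treating $\lambda$ as an inert parameter, this extends pointwise in $\lambda$ as in Definition \ref{extension_lda}.

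Next, I would compute the right-hand side $[r_{12}(\lambda,\mu), W_1(\lambda) + W_2(\mu)]$ using the commutation relations of Pauli matrices. Since the $r$-matrix is identical to the NLS case, the tensor structure matches formula \eqref{form_PB_NLS} verbatim, with $W^i$ now standing for the mKdV Lax components. The output is a combination of $\sigma_\pm \otimes \sigma_\mp$, $\sigma_3 \otimes \sigma_\pm$, and $\sigma_\pm \otimes \sigma_3$, with coefficients of the form $(\mu-\lambda)^{-1}(W^i(\lambda)-W^i(\mu))$, split naturally into a $dx$-part (from $U$) and a $dt$-part (from $V$).

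For the left-hand side, the splitting formula from Proposition \ref{decompositionmKdV} gives $\cpb{W^i(\lambda)}{W^j(\mu)} = \{V^i(\lambda), V^j(\mu)\}_T\,dt - \{U^i(\lambda), U^j(\mu)\}_S\,dx$. Since $U(\lambda)$ is identical to the NLS case, the $\{\cdot,\cdot\}_S$ contributions coincide with those already computed there, and they reproduce the $dx$-part of the right-hand side. The heart of the proof is therefore the computation of the time brackets $\{V^i(\lambda), V^j(\mu)\}_T$. One inserts the explicit mKdV expressions for $V^i$, uses the time-bracket formula from Proposition \ref{decompositionmKdV}, and carefully organises the resulting polynomials in $\lambda, \mu$ and the fields.

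The main obstacle is precisely this last computation. The $V^i$ contain $\lambda^3, \lambda^2, \lambda$ and nonlinear polynomial field dependence (including second-order jets $q_{xx}, r_{xx}$), so each time-bracket produces many terms. One must expect delicate telescoping: the polynomial combinations in $\lambda, \mu$ must reorganise themselves as $(\mu-\lambda)^{-1}$ times differences $V^i(\lambda) - V^i(\mu)$ to match the right-hand side. Crucially, the $6qr$ coefficient appearing in the time bracket (a consequence of the higher-jet contribution $-6iqr\,\delta q \wedge \delta r \wedge dt$ in $\Omega$) is exactly what is needed to cancel spurious terms coming from $\partial V^i/\partial q$ and $\partial V^i/\partial r$ against those from $\partial V^i/\partial q_{xx}$ and $\partial V^i/\partial r_{xx}$. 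Once this cancellation is verified component by component, identification with the right-hand side completes the proof.
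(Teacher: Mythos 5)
Your proposal follows essentially the same route as the paper's proof: decompose $W$ in the $\sigma_\pm,\sigma_3$ basis, verify the components are Hamiltonian via the conditions of Proposition \ref{decompositionmKdV}, use the splitting formula to reduce everything to the single-time brackets (with the space brackets inherited from NLS since $U$ is unchanged), and match the resulting list of nonzero covariant brackets against the directly computed commutator. Your additional observation about the role of the $6qr$ term in the time bracket is consistent with how the cancellations work out in the paper's explicit list, so the plan is sound.
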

\begin{proof}The direct calculation follows exactly the same steps as before so we only provide the main steps. 
On the one hand, we find 
\begin{multline}
\label{comm_mKdV}
	[r_{12}(\lambda-\mu), W_1(\lambda) + W_2(\mu)] = (-idx+i(\mu^2 +\mu \lambda + \lambda^2 +2qr)dt) (\sigma_+\otimes \sigma_- - \sigma_-\otimes \sigma_+) \\+ ((\lambda+ \mu)r-ir_x)dt (\sigma_3 \otimes \sigma_-  -  \sigma_-\otimes \sigma_3) + ((\mu +\lambda)q +iq_x)dt( \sigma_+\otimes \sigma_3-\sigma_3 \otimes \sigma_+) \,.
	\end{multline}
	On the other hand, writing $W = W^3 \sigma_3 + W^+\sigma_+ + W^- \sigma_-$ with
\begin{gather}
W^3(\lambda) = -\frac{i\lambda}{2} dx + ( -\frac{\lambda^3}{2i} + i \lambda qr + r_x q- q_x r)dt\\
W^+(\lambda) = q dx + (-\lambda^2 q - i \lambda q_x + q_{xx} -2 q^2r) dt\\
W^-(\lambda) = r dx + (-\lambda^2 r + i \lambda r_x + r_{xx} -2 qr^2) dt
\end{gather}
we find that these are Hamiltonian forms. Thus, the only nonzero covariant Poisson brackets are
\bea
&&	\cpb{W^+(\lambda)}{W^-(\mu)} =-i dx + i ( 2 qr + \mu^2 + \lambda \mu + \lambda^2) dt\,,\\
&&	\cpb{W^+(\lambda)}{W^3(\mu)} = ( q (\lambda+ \mu) + i q_x) dt\,,\\
&&	\cpb{W^-(\lambda)}{W^+(\mu)}= i dx - i(2qr + \mu^2 + \lambda \mu + \lambda^2)dt\,,\\
&&	\cpb{W^-(\lambda)}{W^3(\mu)}= (-r(\lambda + \mu) +i r_x) dt\,,\\
&&	\cpb{W^3(\lambda)}{W^+(\mu)}= - (q(\mu + \lambda) + i q_x) dt\,	,\\
&&	\cpb{W^3(\lambda)}{W^-(\mu)} = (r(\lambda + \mu) - i r_x) dt\,.
\eea
	which are combined according to \eqref{def_matrix_cov_PB} to find that $\cpb{W_1(\lambda)}{W_2(\mu)}$ is exactly equal to the right-hand side of \eqref{comm_mKdV}.
\end{proof}
A comment is in order regarding the fact that the same $r$-matrix as for the NLS appears here for the mKdV. In the standard Hamiltonian approach to the AKNS hierarchy, the only $r$-matrix structure is that given in \eqref{r_rational} since all the higher flows share the same $U$ matrix. In our covariant context, since the same $r$-matrix appears for both the $U$ and $V$ Lax matrices and since both flows share the same $U$, we consistently find that the same $r$-matrix appears in the covariant Poisson structure for NLS and mKdV. We note however that this points to a deeper connection between our covariant approach and the notion of integrable hierarchies. The study of such a connection is beyond the scope of the present paper and is left for future investigation. 

We conlude the mKdV example by a description of its covariant Hamiltonian formulation. We find the needed components of the energy-momentum tensors $T_x$ and $T_t$ to be $T_{xx} =  \frac{i}{2} (q r_t - r q_t) +i(q_{xx} r_x - r_{xx} q_x)$ and $T_{tt}= - \frac{i}{2} (q_{xx} r_x - r_{xx} q_x) + \frac{3i}{2} qr(q r_x - rq_x)$. Hence, the covariant Hamiltonian $\mathcal{H} = H \vol$ is given by
\begin{equation}
H= i(q_{xx} r_x - r_{xx} q_x)
\end{equation}
The covariant Hamiltonian vector field $X_H$, such that $\ip{X_H}{ \Omega} = \delta H$ can be taken as
\begin{equation}
X_H =-6 (qrq_x \partial_q + qrr_x \partial_r) \wedge \partial_x -(q_x \partial_q + r_x \partial_r + q_{xx} \partial_{q_x} + r_{xx} \partial_{r_x}) \wedge \partial_t.
\end{equation}
Equipped with this, we have the following result.
\begin{proposition}
	The covariant Hamiltonian formulation of the NLS equation is given by
	\be
	dW(\lambda) = \cpb{H}{W(\lambda)}\,\vol\,,
	\ee
	where $W(\lambda)$ is the Lax form.
\end{proposition}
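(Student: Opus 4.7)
The plan is to verify $dW(\lambda) = \cpb{H}{W(\lambda)}\,\vol$ for the NLS equation by direct matching of the two sides in the $\text{sl}(2,\CC)$ basis $\{\sigma_3,\sigma_+,\sigma_-\}$, using the NLS Hamiltonian $H = q^2 r^2 - q_x r_x$, the NLS Hamiltonian vector field $X_H$ computed earlier, and the NLS Lax form $W(\lambda)=U(\lambda)\,dx + V(\lambda)\,dt$. The overall structure mirrors the covariant Hamiltonian verification already carried out for the sine-Gordon model.

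For the left-hand side I would exploit that $W(\lambda)$ is a horizontal one-form, so that $dW(\lambda) = (\partial_x V(\lambda) - \partial_t U(\lambda))\,\vol$, and simply substitute the explicit NLS Lax pair to read off the coefficients of $\sigma_3$, $\sigma_+$, and $\sigma_-$. For the right-hand side I would first compute the vertical variation $\delta W(\lambda) = \delta U \wedge dx + \delta V \wedge dt$ from the explicit Lax matrices, and then apply $\cpb{H}{W(\lambda)} = \ip{X_H}{\delta W(\lambda)}$. Proposition \ref{decompositionNLS} guarantees that this contraction splits cleanly into a $dx$-part governed by $\{~,~\}_S$ and a $dt$-part governed by $\{~,~\}_T$, which makes the calculation tractable.

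Once both sides are expressed as $\vol$ multiplied by an $\text{sl}(2,\CC)$ element, the identification is coefficient-wise. The $\sigma_3$ coefficients agree as an automatic algebraic identity; the $\sigma_+$ and $\sigma_-$ coefficients, after the symmetric $\lambda$-linear contributions $\lambda q_x$ and $\lambda r_x$ cancel from both sides, reduce respectively to $iq_t + q_{xx} - 2q^2 r = 0$ and $ir_t - r_{xx} + 2qr^2 = 0$. These are precisely the NLS equations \eqref{NLS}, which proves the equivalence claimed by the proposition. As an independent consistency check, in the spirit of the remark made after the sine-Gordon calculation, one can verify that $\cpb{H}{W(\lambda)} = [U(\lambda),V(\lambda)]$, thereby connecting the covariant Hamilton equation to the zero-curvature form of NLS underlying Theorem \ref{covariantrmatrix:NLS}.

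The only genuine obstacle is bookkeeping: one must respect the sign convention $(-1)^{2-p}$ built into Definition \ref{def_cov_PB} and be careful that the $\partial_{\bullet}\wedge\partial_x$ and $\partial_{\bullet}\wedge\partial_t$ components of the bivector field $X_H$ contract correctly with the $dx$ and $dt$ factors of $\delta W(\lambda)$. Beyond that, every step is a routine substitution, and the crucial algebraic check is that all $\lambda$-dependent terms except for the symmetric $\lambda q_x$ and $\lambda r_x$ cancel between the two sides, so that the two NLS evolution equations emerge without residual terms.
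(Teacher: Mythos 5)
Your proposal is correct and follows essentially the same route as the paper: compute $dW(\lambda)=(\partial_x V(\lambda)-\partial_t U(\lambda))\,\vol$ on one side, contract the bivector $X_H$ with $\delta W(\lambda)=\delta U\wedge dx+\delta V\wedge dt$ on the other, and observe that the $\sigma_3$ components match identically while the $\sigma_\pm$ components (after the $\lambda q_x$, $\lambda r_x$ terms cancel) reduce exactly to the two NLS equations. One minor caveat: the splitting formula of Proposition \ref{decompositionNLS} concerns a pair of Hamiltonian one-forms, whereas here $H$ is a zero-form with a bivector field $X_H$, so that formula is not the tool being used --- but this aside is inessential, since the direct contraction $\ip{X_H}{\delta W(\lambda)}$ you describe is precisely what the paper computes.
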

\begin{proof} By direct computation as in the two previous examples.
\end{proof}
In the same way as in the two previous examples, one can show that $\cpb{H}{W(\lambda)} = [U(\lambda), V(\lambda)]$.

\section*{Conclusions}

By means of three of the most famous examples of integrable field theories, we established a systematic connection between the classical $r$-matrix formalism and the covariant Hamiltonian framework. The central result is the covariant Poisson structure of Theorem \ref{main_th} which represents the covariant analog of the celebrated (linear) Sklyanin Poisson algebra \cite{Skly,Skly_r} defined by the classical $r$-matrix. As this is the first time that such a connection is obtained, our results open the way to various investigations. We only mention a few here. 

From the point of view of the theory of the classical $r$-matrix, the examples considered here all belong to the so-called ultralocal case \ie the case of a skew-symmetric $r$-matrix. The non-ultralocal (non skew-symmetric) case is a natural open question, motivated by the fact that several important models of theoretical physics belong to this class \cite{Maillet}. 

Another natural question is that of the extension of the present work to an entire integrable hierarchy rather than a single representative of such a hierarchy in a two-dimensional spacetime. Indeed, the results of \cite{AC} show that the observation of the same $r$-matrix between any two pairs of times within an integrable hierarchy holds. We anticipate that a satisfactory answer to the question of a covariant formulation of the $r$-matrix structure of an entire hierarchy will rely on ideas put forward recently in \cite{SNC} regarding Lagrangian multiforms and a variational approach to Lax representations of integrable field theories. 
We hope that this could shed some light on the relationship between the present results and traditional features of integrable PDEs such as recursion operators.

It would also be desirable to achieve a fully fledged theory of the covariant $r$-matrix, combining the geometric, coordinate-independent formulation of variational calculus with the geometric and algebraic formulations of the $r$-matrix theory. This longer term goal is left for future work.

We note that the ideas of multisymplectic geometry have been very successfully applied to dispersive wave propagation problems \cite{Bridges} and to numerical integration algorithms \cite{BR}. However, it is not clear yet how our results could shed light on these frameworks.

\section*{Acknowledgements}
It is a pleasure to acknowledge helpful discussions with Raffaele Vitolo.

\end{document}